\theoremstyle{plain}
\newtheorem{thm}{Theorem}
\newtheorem{coro}{Corollary}
\newtheorem{lem}{Lemma}
\newtheorem{pf}{Proof of Theorem}
\newcommand\e{\epsilon}
\newcommand\mR{\mathds {R}}
\newcommand\lam{\lambda}
\title{Adaptive elastic net and Separate Selection from Least Squares for ultra-high dimensional regression models}
\author{Yuehan Yang\thanks{School of Statistics and Mathematics, Central University of Finance and Economics, Beijing 100081, email: {\tt yyh@cufe.edu.cn}.}, 
	~~ Hu Yang\thanks{College of Mathematics and Statistics, Chongqing University, Chongqing 401331, PR China, email: {\tt yh@cqu.edu.cn}.}
}
\begin{document}
\maketitle







\begin{abstract}

This paper studies the asymptotic properties of the adaptive elastic net in ultra-high dimensional sparse linear regression models and proposes a new method called SSLS (Separate Selection from Least Squares) to improve prediction accuracy. Besides, we prove that SSLS has the superior performance both in the theoretical part and empirical part.

In this paper, we prove that the probability of adaptive elastic net selecting wrong variables can decays at an exponential rate with very few conditions. Irrepresentable Condition or similar constraint isn't necessary in our proof. We derive accurate bounds of bias and mean squared error (MSE) which both depend on the choice of parameters, and also show that there exists a bias of asymptotic normality of the adaptive elastic net. Furthermore, simulations and empirical part both show that the prediction accuracy of the penalized least squares requires more improvement.

Therefore, we propose SSLS to improve the prediction. It selects variable first, reducing high dimension to low dimension by using the adaptive elastic net in this paper. In the second step, the coefficients are constructed based on the OLS estimation. We show that the bias of SSLS can decays at an exponential rate. Also, MSE decays to zero. Finally, we prove that the variable selection consistency of SSLS implies the asymptotic normality of SSLS. Simulations given in this paper illustrate the performance of the SSLS, adaptive elastic net and other penalized least squares. The index tracking problem in stock market is studied in the empirical part with other methods.
\end{abstract}

\vspace*{4mm} 
\noindent {\bf Keywords:} Adaptive Elastic Net; SSLS; Variable selection; Oracle property.



\section{Introduction}
In recent years, modern technology makes massive, large-scale data sets appear frequently. That is, the number of parameters ($p$) is much larger than the sample size ($n$). Financial problems for instance, investment portfolio involves hundreds of stocks but valid sample sizes are often only one hundred or less since the samples obtained before 6 months ago often loses their effectiveness. Moreover, computational field, biological field, etc, data sets like this ($n \ll p$) is becoming more and more important in diverse fields, and poses great challenges and opportunities for statistical analysis.

Consider the regression model
\begin{equation}
\label{eqn:regressionmodel}
y_n = X_n \beta_n + \epsilon_n,
\end{equation}
where $X_n $ is the $n\times p$ design matrix of predictor variables. $\beta_n \in \mathds {R}^p$ is the true regression coefficients and $\epsilon_n = (\epsilon_{1,n},\epsilon_{2,n},...,\epsilon_{n,n})'$ is a vector of i.i.d. random variables with mean $0$ and variance $\sigma^2$.

Increasing statistic tools are developed to solve the high-dimensional data analysis, \cite{Efron07,fan2001variable,bickel2009simultaneous,lee2010adaptive,Tibshirani1996,zou2009adaptive}. Penalized least squares like lasso, \cite{YuBin06(lasso)} established the Irrepresentable Conditions for the variable selection \cite{fuchs2004recovery, meinshausen2006high, tropp2004greed, wainwright2009sharp}; elastic net \cite{Zou2005(elastic)}, adaptive lasso \cite{HuangJian08(adaptive),Zou06(adaptive)}, etc have been widely used.

SCAD \cite{fan2001variable} is also a very popular method due to its good computational and statistical properties. It enjoys the oracle property\footnote{ Oracle property can correctly identify the set of nonzero components of $\beta_n$ with probability tending to $1$, and at the same time, estimate the nonzero components accurately \cite{chatterjee2013rates,fan2001variable}. } 
which means it can perform as well as the oracle. \cite{fan2011nonconcave} studied the penalized likelihood with the $l_1$-penalty. \cite{bradic2011penalized} proposed the penalized composite likelihood method in ultra-high dimensions. \cite{belloni20111$l_1$} discussed the $l_1$-penalized quantile regression in high-dimensional sparse models. \cite{fan2014adaptive} proposed weighted robust lasso in the ultra-high dimensional setting that the number of parameters grow exponentially with the sample size.

The adaptive elastic net estimator is defined as the minimizer of the weighted $l_1$-penalized and $l_2$-penalized least squares criterion function.
\begin{equation}
\label{eqn:adaptiveelastic}
  \hat \beta_n(\lam_{1,n},\lam_{2,n}) =  \mathop {argmin }\limits_{\beta \in \mR^p}\{\frac{1}{2}||y_n-X_n\beta||^2_2+\frac{1}{2}\lam_{2,n} ||\beta||^2_2 + \lam_{1,n} \sum\limits_{j=1}^{p} \hat w_{j,n} |\beta_j|\}.
\end{equation}
The $l_1$ part performs automatic variable selection, while the $l_2$ part stabilizes the solution paths and improves the prediction. $\{\hat w_{j,n}\}^p_{j=1}$ are the adaptive data-driven weights, which used to reduce the bias problem induced by the $l_1$-penalty. Hence the adaptive elastic net is an improved version of the lasso, elastic net and adaptive lasso. Adaptive weights can be computed by different values: $\hat w_j = (|\hat \beta_j (ols)|)^{- \gamma}$, $j =1,...,p$ where $\gamma$ is a positive constant \cite{Zou06(adaptive)}, $\hat w_{j,n} =|\tilde{\beta}_{j,n}|^{-1}$ and $\tilde{\beta}_{n}= X'_n y_n / n$ \cite{HuangJian08(adaptive)}, $\hat w_{j,n} = (|\hat \beta_{j,n} (elastic net)|)^{- \gamma}$ \cite{zou2009adaptive}. The adaptive elastic net method is shown that which enjoys the oracle property \cite{fan2001variable} with a diverging number of predictors \cite{zou2009adaptive}.

Although the oracle property of the adaptive elastic net estimators with a diverging number of predictors was already studied before, the asymptotic properties of the adaptive elastic net with the ultra-high dimensional setting remains unknown. Furthermore, penalized least squares always need the particular condition to get variable selection consistency and few literatures discussed about the accuracy of this statistical inference on the nonzero regression parameters before.

In this paper, we first study the asymptotic properties of adaptive elastic net for the growing number of parameters where the dimensionality can grow exponentially with the sample size. We find a simple set
\begin{equation}
A_{n}  \equiv \left\{ ||W_n||_{\infty}  \leqslant K n^{\eta}\right\},
\end{equation}
where $W_n = X'_n \e_n/ \sqrt{n}$ . $\eta$ is a positive constant. We compute adaptive parameter $\widetilde{\beta}_n$ by the lasso estimator and the adaptive weighter $\hat w$ is computed as $\hat w_{j,n} = |\widetilde{\beta}_{j,n}|^{\gamma}$, $\gamma \geqslant 0$.  According to the estimation consistency of adaptive parameter, the choice of $\gamma$ and conditional on $\{A_n\}$, we lead to variable selection consistency of the adaptive elastic net when the noise vector $\e_n$ has i.i.d. entries in the ultra-high dimensional setting. In our proof, the probability for adaptive elastic net to select true model is covered by the probability of $A_n$. The first half part of the proof of Theorem \ref{thm:1} states the probability of $A_n^c$ decays at an exponential rate under ultra-high dimensional setting. The latter part states the relationship between $P(\hat S_n \neq S_n)$ and $P(A_n)$ without any other constrains.

Then, we introduce the MSE and bias of adaptive elastic net and indicate that their decay rate depends on the probability of selecting wrong variables $P(\hat S_n \neq S_n)$. Consequently, the MSE and bias can both decay to zero with suitable choice of tuning parameters $\lam_{1,n}$ and $\lam_{2,n}$. However, one weakness of these rates is that they may lead to an inferior rate depending on the choice of the tuning parameters and the initial parameter $\tilde{\beta}_n$.

We also find that the traditional penalized least squares cannot have an ideal prediction accuracy both in simulations and financial fields. For instance, we apply penalized least square method to track SP500. It has $2\%$ to $4\%$ predicted (annual) tracking errors when select 50 constituent stocks. If we reduce the number of selected stocks like 20, the tracking errors increase significantly. We want to improve the mentioned theoretical defect and prediction accuracy, oscillation simultaneous by applying other method.

Therefore, we propose a valid technique, called SSLS, for Separate Selection from Least Squares. It selects variable first and sets others to $0$, reducing high dimensional setting to low dimension setting by adaptive elastic net in the paper, then uses Ordinary Least Squares (OLS) to estimate coefficients. That is
\begin{equation}
\hat \beta_{j,n}(ssls) = \left\{ {\begin{array}{cc}
\hat \beta_{j,n}(ols), & j \in \hat S_n \\
0, & j \notin \hat S_n,
\end{array}} \right.
\end{equation}
where $\hat \beta_{j,n}(ols)$ obtained in the low dimensional linear regression models: $(y_n, X_{\hat S_n})$. There are two reasons why the ordinary least squares (OLS) estimates is unsuitable in high dimensional setting: prediction accuracy and interpretation \cite{Tibshirani1996}. But if we don't need shrink any coefficient to $0$ and consider the regression model in low dimension setting. OLS estimates lead to a satisfying prediction accuracy. This method is similar as OLS post-Lasso estimator \cite{belloni2013least} which is shown at least as well as Lasso.

We use adaptive elastic net to select the variable and study the properties of SSLS, hence SSLS has variable selection consistency. We show that the bias of SSLS decays at an exponential rate and the decay rate of MSE achieves the oracle convergence rate. Also, the asymptotic normality of SSLS is proved.

Finally, simulations and empirical part show that SSLS produces large improvement compared with other methods. In the simulation part, we implement five methods under different settings and use $l_1$, $l_2$ loss to be measures. SSLS has the best performance among others in all the settings based on $100$ replications. Similarly in empirical part, SSLS also outperforms lasso in the most months and significantly reduces the tracking error when select very few consistent stocks to track the index.

The rest of the paper is organized as follows. In section 2, we state the regularity conditions and introduce the theoretical framework, then derive the accurate convergence rate of the adaptive elastic net's probability of variable selection, the bounds of bias, MSE and the rate of convergence to the oracle distribution. Section 3 proposes a new method called SSLS and study the properties. Computations are given in Section 4. Section 5 and Section 6 show simulation examples and applications, index tracking in financial field.

\section{Model Selection Oracle Property}
We are interested in the sparse modeling problem where the true model has a sparse representation. That is, let $S_n\equiv\{j \in \{1,2,...,p\}: \beta_{j,n} \neq 0\}$ with assumption of cardinality $|S_n|=q$ ($q \ll p$). The adaptive elastic net yields an estimator $\hat S_n \equiv \{j \in \{1,2,...,p\}: \hat \beta_{j,n} \neq 0\}$. Without loss of generality, assume $\beta_n =(\beta_{1,n},...,\beta_{q,n},\beta_{q+1,n},...,\beta_{p,n})'$ where $\beta_{j,n} \neq 0$ for $j=1,...,q$ and $\beta_{j,n}=0$ for $j =q+1,...,p$. Then write $\beta^{(1)}_{n}=(\beta_{1,n},...,\beta_{q,n})'$ and $\beta^{(2)}_n=(\beta_{q+1,n},...,\beta_{p,n})$, $X_n(1)$ and $X_n(2)$ are the first $q$ and last $p-q$ columns of $X_n$ respectively. $C_n=\frac{1}{n}X_n'X_n$ can be expressed in a block-wise:
\[
C_n = \left( {\begin{array}{*{20}c}
   {C_{11,n}} \quad {C_{12,n}}\\
   {C_{21,n}} \quad {C_{22,n}}  \\
\end{array}} \right),
\]
and $W_n=X'_n \e_n/ \sqrt{n}$. Similarly, $W^{(1)}_n$ and $W^{(2)}_n$ indicate the first $q$ and last $p-q$ elements of $W_n$.

We want to use the OLS estimator to be the initial estimator $\widetilde{\beta}_n$. However $X_n'X_n$ is always singular and the OLS estimator of $\beta_n$ is no longer uniquely defined. In this case, we apply the lasso estimator $\hat \beta_{lasso}$\footnote{
	The lasso estimator is defined as
	\begin{equation}
	\label{eq:lasso}
	\hat \beta_n(\lambda_n)\in  \mathop {argmin }\limits_{\beta\in \mR^p }\{\frac{1}{2n}||Y_n-X_n\beta||^2_2+\lambda_n ||\beta||\},
	\end{equation}
	where the lasso estimator is written as $\hat \beta_{lasso}$ in this paper.
}  to be the initial estimator. 

According to the estimation consistency of $\hat \beta_{lasso}$ (related result is offered in the Lemma 2 of Appendix), we lead to variable selection consistency of the adaptive elastic net under follow constrains.

Let $\Lambda_{min}(C_{11,n})$ denotes the smallest eigenvalues of $C_{11,n}$, we define the following regular conditions

\begin{enumerate}
\item[(C.1)]  Suppose $\Lambda_{min}(C_{11,n}) > K n ^{-a}$ for some $K \in (0, \infty)$ and $a \in[0,1]$. Furthermore, $n^{-1} \sum_{i=1}^{n}x_{i,j}^2 \leqslant 1/\sigma^2$ for $j=1,...,p$.
\end{enumerate}
\begin{enumerate}
	\item[(C.2)] Restricted Eigenvalue (RE) condition, i.e. there exists constant $\kappa_\iota$, such that
	\begin{equation}
		\label{eqn:RE}
		\frac{||X_n\beta_n||^2_2}{n}\geqslant \kappa_\iota ||\beta_n||^2_2  \ \  \forall\beta_n\in R^p, \ \sum_{j \notin S_n} |\beta_{j,n}| \leqslant 3 \sum_{j\in S_n} |\beta_{j,n}|.
	\end{equation}
\end{enumerate}

(C.1) gives the regularity conditions on the design matrix, which are typical assumptions in sparse linear regression literature, see for example \cite{huang2008asymptotic,fan2014adaptive,YuBin06(lasso)}. The first part of condition (C.1) ensures a lower bound on the smallest eigenvalue of $C_{11,n}$. The second part is needed for Bernstein's inequality in Theorem \ref{thm:1}.

RE condition \eqref{eqn:RE}, developed by \cite{bickel2009simultaneous}, is a mild condition and has been studied in past work on Lasso \cite{lv2009unified}. We use $\hat \beta_{lasso}$ to be the adaptive estimator of adaptive elastic net. This condition is applied to make sure the estimation consistency of lasso estimator. 

As mentioned in the Introduction part, the choice of adaptive estimator $\widetilde{\beta}_n$ is not unique. We know there must be other more optimal estimator than the lasso estimator. For instance, if $p < n$, $\hat \beta_{OLS}$ is a more appropriate choice. However, considering about the ultra-high dimensional setting and the existing choice in literature. We prefer the lasso estimator since the related results (like the estimation consistency) of lasso is mature enough. 

\subsection{Oracle Regularized Estimator}
In this section, we study the variable selection property of adaptive elastic net when the dimensionality can grow exponentially with the sample size. That is, $P(\hat S_n =S_n) \rightarrow 1$ as $n \rightarrow \infty$ when $p=O(e^{n^c})$.

One defined sign consistency which stronger than the usual selection consistency, i.e. $P(sign(\hat \beta_n)= sign( \beta_n)) \rightarrow 1$\cite{YuBin06(lasso)}. It can be satisfied if follow inequality holds.
\begin{equation}
sign( \beta^{(1)}_n) (\hat \beta^{(1)}_n -\beta^{(1)}_n) > - |\beta^{(1)}_n|.
\end{equation}

That is, by adding a simple restraint, $|\hat \beta^{(1)}_n - \beta^{(1)}_n| < |\beta^{(1)}_n|$, we can obtain the sign consistency when the adaptive elastic net achieves the variable selection consistency. We proof the probability of selecting wrong variables here mostly for simplicity of presentation.

\begin{thm}
\label{thm:1}
Assume $\e_{i,n}$ are i.i.d. random variables with mean $0$, and variance $\sigma^2$, let $A_{n}  \equiv \{ ||W_n||_{\infty}  \leqslant K n^{\eta}\}$, where $\eta$ is a positive constant. If $\lam_{1,n} < K(\delta,\lam_{2,n}) \cdot n^{\eta+a+b}$, where $0 <\delta <1$. Then let $\eta$ bounded by
\begin{equation}
0< \eta < \dfrac{\gamma-1}{2(\gamma+1)} - \dfrac{3b+2a}{2(\gamma+1)},
\end{equation}
where $b$ is a positive constant by setting in $q=O(n^b)$ and $p = O(e^{n^c})$, $c \leqslant \frac{2}{3}\eta$. Under condition (C.1) and (C.2), we have 
\begin{equation}
P( \hat S_n= S_n ) \leqslant 1- P(A_n^c) \leqslant 1- o(e^{-n^{c}}) \rightarrow 1  \ as \ n \rightarrow \infty.
\end{equation}
\end{thm}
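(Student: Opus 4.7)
The statement decomposes naturally into two halves linked by the event $A_n$: bound $P(A_n^c)$ exponentially, and then show that the adaptive elastic net recovers exactly $S_n$ on $A_n$. For the first half, I would treat each coordinate $W_{j,n}=n^{-1/2}\sum_{i=1}^n x_{ij}\epsilon_{i,n}$ as a sum of $n$ independent mean-zero random variables. The second part of (C.1), $n^{-1}\sum_i x_{ij}^2\leq 1/\sigma^2$, controls the variance of $W_{j,n}$ uniformly in $j$, so Bernstein's inequality yields $P(|W_{j,n}|>Kn^\eta)\leq 2\exp(-c_1 n^{2\eta})$ for some absolute constant $c_1>0$. A union bound over $j=1,\ldots,p$, together with $p=O(e^{n^c})$ and the hypothesis $c\leq \frac{2}{3}\eta<2\eta$, gives $P(A_n^c)\leq 2p\exp(-c_1 n^{2\eta})=o(e^{-n^c})$, which is exactly the rate asserted by the theorem.

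The harder half is to certify that $\hat S_n=S_n$ on $A_n$. The natural tool is a primal--dual witness argument tailored to the adaptive elastic net. I would first solve the restricted optimization in which coordinates outside $S_n$ are forced to zero; since $\Lambda_{\min}(C_{11,n})>Kn^{-a}$ by (C.1) and $\lambda_{2,n}\geq 0$, this restricted problem is strictly convex and admits a unique minimizer $\bar\beta^{(1)}_n$. The restricted KKT equation expresses $\bar\beta^{(1)}_n-\beta^{(1)}_n$ linearly in $W^{(1)}_n$, $\lambda_{2,n}\beta^{(1)}_n$, and the adaptive weights on $S_n$. Using $\|W^{(1)}_n\|_\infty\leq Kn^\eta$ from $A_n$, the eigenvalue lower bound from (C.1), and the sparsity $q=O(n^b)$, I would conclude that $\|\bar\beta^{(1)}_n-\beta^{(1)}_n\|_\infty$ is small enough that no nonzero coordinate is pushed across zero, giving $\mathrm{sign}(\bar\beta^{(1)}_n)=\mathrm{sign}(\beta^{(1)}_n)$.

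The main obstacle, and the step that pins down the intricate upper bound on $\eta$, is the dual feasibility check at inactive coordinates: I need $|X_j'(y_n-X_n\bar\beta)|<\lambda_{1,n}\hat w_{j,n}$ for every $j\notin S_n$. This requires simultaneously (a) controlling the residual through $W^{(2)}_n$ and the restricted error above, both bounded on $A_n$ at scale $n^{\eta+a+b}$ once $\lambda_{1,n}<K(\delta,\lambda_{2,n})\cdot n^{\eta+a+b}$ is used, and (b) controlling the adaptive weights $\hat w_{j,n}=|\tilde\beta_{j,n}|^{\gamma}$ through the $\ell_2$-consistency of the lasso initial estimator, which is the content of the Appendix Lemma 2 under the RE condition (C.2). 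Propagating the lasso error through the $\gamma$-th power in the weights, and then balancing it against the residual growth while keeping the bound uniform over the $p-q$ inactive coordinates, forces exactly the constraint $2(\gamma+1)\eta<\gamma-1-3b-2a$ that appears in the theorem. Once dual feasibility is verified, strict convexity of the objective (from the $\ell_2$ penalty) certifies that $(\bar\beta^{(1)}_n,0)$ is the unique global minimizer, so $\hat S_n=S_n$ on $A_n$; combined with the first half this yields $P(\hat S_n=S_n)\geq 1-P(A_n^c)\geq 1-o(e^{-n^c})\to 1$.
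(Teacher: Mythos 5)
Your first half is the paper's first half: the identical Bernstein-plus-union-bound computation over the $p$ coordinates of $W_n$, using the second part of (C.1) for the variance control and $p=O(e^{n^c})$ with $c\leq\frac{2}{3}\eta$ to absorb the union bound, yielding $P(A_n^c)=o(e^{-n^c})$. For the second half you take a genuinely different route. The paper never constructs a dual certificate; it works directly with the objective difference $V_n(\hat u_n)=F_n(\hat\beta_n)-F_n(\beta_n)$ in the rescaled variable $\hat u_n=\sqrt{n}(\hat\beta_n-\beta_n)$ and localizes the argmin in two stages: first, conditional on $A_n$ and on the lasso-consistency event of Lemma 2, $V_n(\hat u_n)>0$ whenever $\|\hat u_n^{(1)}\|\geq M_n:=K(\delta,\lambda_{2,n})\,n^{\eta+a+b}$, so (since $V_n(0)=0$) the minimizer lies inside that ball; second, $V_n(\hat u_n)-V_n(\hat u_n^{(1)},0)>0$ whenever $\|\hat u_n^{(1)}\|<M_n$ and $\hat u_n^{(2)}\neq 0$, so the minimizer has $\hat u_n^{(2)}=0$. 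Your primal--dual witness construction is the standard alternative, and the decisive inequality is the same in both proofs: one must verify, uniformly over inactive coordinates, something of the form $\lambda_{1,n}n^{-1/2}n^{\gamma/2-\eta\gamma}-n^{\eta}-q^{1/2}M_n>0$, and this is exactly what forces $0<\eta<\frac{\gamma-1}{2(\gamma+1)}-\frac{3b+2a}{2(\gamma+1)}$. In both arguments the absence of an irrepresentable condition is bought by the divergence of the adaptive penalty on the inactive set. What your route buys is an explicit uniqueness certificate from strict convexity and an explicit sign-consistency check on the support; what the paper's route buys is that it reads off the localization set $B_n$ directly without KKT bookkeeping.

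Two cautions. First, the weights: the theorem writes $\hat w_{j,n}=|\tilde\beta_{j,n}|^{\gamma}$, but both the paper's proof and your dual-feasibility step actually require $\hat w_{j,n}=|\tilde\beta_{j,n}|^{-\gamma}$, so that the effective threshold $\lambda_{1,n}\hat w_{j,n}$ blows up at inactive coordinates where $\tilde\beta_{j,n}$ is small; with the $+\gamma$ convention you quote, the thresholds at inactive coordinates shrink and your step (b) would fail. You should state the convention you are really using. Second, your assertion that no active coordinate is pushed across zero requires a minimum-signal condition that is not among (C.1)--(C.2); the paper's own proof has the same issue (it only establishes $\hat u_n^{(2)}=0$, i.e.\ no false positives, and defers the beta-min question to its remark on sign consistency), so this is not a defect relative to the paper, but it is a gap in the theorem as stated that your write-up inherits rather than closes.
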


If $\lam_{1,n} \geqslant K(\delta,\lam_{2,n}) \cdot n^{\eta+a+b}$, we have the follow corollary
\begin{coro}
Follow the same setting in the Theorem 1 and consider the rest of $\lam_{1,n}$ that $\lam_{1,n} \geqslant K(\delta,\lam_{2,n}) \cdot n^{\eta+a+b}$, then let $\eta$ bounded by
\begin{equation}
 n^{\eta \gamma} < n^{\frac{\gamma-1-b}{2}} ||\beta^{(1)}_n||.
\end{equation}
we have $P( \hat S_n= S_n ) \leqslant 1- P(A_n^c) \leqslant 1- o(e^{-n^{c}}) \rightarrow 1  \ as \ n \rightarrow \infty$.
\end{coro}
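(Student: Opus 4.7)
The plan is to inherit the architecture of the proof of Theorem~\ref{thm:1} and only rework the portion that depends explicitly on the magnitude of $\lam_{1,n}$. The event $A_n=\{\|W_n\|_\infty\leqslant Kn^\eta\}$ is defined purely in terms of the noise $\e_n$ and the design, so the concentration bound $P(A_n^c)=o(e^{-n^c})$ obtained in the first half of the proof of Theorem~\ref{thm:1} via Bernstein's inequality under the column-normalisation clause of~(C.1) and $p=O(e^{n^c})$ with $c\leqslant 2\eta/3$ transfers verbatim, with no reference to $\lam_{1,n}$. Thus it suffices to reprove the implication $A_n\Rightarrow\{\hat S_n=S_n\}$ up to an additional event of probability $o(e^{-n^c})$ under the alternative hypothesis $\lam_{1,n}\geqslant K(\delta,\lam_{2,n})\,n^{\eta+a+b}$.

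Next I would condition on $A_n$ and write out the KKT conditions of~\eqref{eqn:adaptiveelastic} in block form, splitting $\hat\beta_n=(\hat\beta_n^{(1)},\hat\beta_n^{(2)})$ with $\hat\beta_n^{(1)}\in\mR^q$. Variable selection consistency reduces to two checks: (i) the sign-preserving inequality $|\hat\beta_{j,n}-\beta_{j,n}|<|\beta_{j,n}|$ for $j\in S_n$, and (ii) the subdifferential inequality $|X_{\cdot j}'(y_n-X_n\hat\beta_n)|\leqslant \lam_{1,n}\hat w_{j,n}$ for $j\notin S_n$. In the large-$\lam_{1,n}$ regime at hand, the penalty dominates the $n^\eta$ noise bound on $A_n$, so~(ii) is automatic; the entire balance then rests on~(i), which is what forces the new numerology.

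For step~(i), inverting $C_{11,n}+(\lam_{2,n}/n)I_q$, whose smallest eigenvalue is at least $Kn^{-a}$ by~(C.1), turns the inequality into a requirement of the schematic form
\[
\bigl\|(C_{11,n}+(\lam_{2,n}/n)I_q)^{-1}\bigl(W_n^{(1)}/\sqrt n-(\lam_{1,n}/n)\,\hat w_n^{(1)}\circ\mathrm{sign}(\beta_n^{(1)})\bigr)\bigr\|<\|\beta_n^{(1)}\|.
\]
On $A_n$ the noise contribution is of order $n^a\cdot n^{\eta+b/2}$ and is already absorbed as in Theorem~\ref{thm:1}. The penalty contribution involves $\hat w_n^{(1)}=|\tilde\beta_n^{(1)}|^\gamma$; invoking the lasso estimation consistency recorded in Lemma~2 of the appendix (valid under~(C.2)), one bounds $|\tilde\beta_{j,n}|$ on $S_n$ and obtains, after summation over the $q=O(n^b)$ active coordinates, a penalty contribution of order $n^{\eta\gamma}\cdot n^{-(\gamma-1-b)/2}$. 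Demanding this to be strictly less than $\|\beta_n^{(1)}\|$ is exactly the stated hypothesis $n^{\eta\gamma}<n^{(\gamma-1-b)/2}\|\beta_n^{(1)}\|$, yielding $\{\hat S_n=S_n\}\supseteq A_n\setminus E_n$ with $P(E_n)=o(e^{-n^c})$, and the conclusion follows.

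The main technical obstacle I foresee is precisely the propagation of the five exponents $a,b,c,\eta,\gamma$ through the inversion of $C_{11,n}+(\lam_{2,n}/n)I_q$ against the random adaptive weight vector $\hat w_n^{(1)}$: four distinct factors (the eigenvalue lower bound $n^a$, the noise level $n^\eta$, the cardinality $n^{b/2}$, and the weight magnitude $n^{\eta\gamma}$) must combine cleanly with the lasso bound on $\tilde\beta_n$ so that the tightest balance is exactly the displayed inequality of the corollary. The probabilistic input, by contrast, is inherited intact from the proof of Theorem~\ref{thm:1} and requires no new tools.
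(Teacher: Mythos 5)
Your proposal diverges from the paper's argument both in architecture and in \emph{where} the new condition on $\eta$ is needed, and the divergence exposes a genuine gap. The paper does not run a KKT/primal--dual witness argument at all: it stays inside the objective-difference framework of the proof of Theorem~\ref{thm:1}, observing that when $\lam_{1,n}\geqslant K(\delta,\lam_{2,n})n^{\eta+a+b}$ the localization radius must be enlarged to $M_n=3\lam_{1,n}\|\beta^{(1)}_n\|$ (i.e.\ $V_n(\hat u_n)>0$ once $\|\hat u^{(1)}_n\|\geqslant 3\lam_{1,n}\|\beta^{(1)}_n\|$), and that the displayed constraint on $\eta$ is then exactly what keeps the second step, inequality \eqref{eq:V2}, positive: the enlarged radius enters \eqref{eq:V2} through the cross term $q^{1/2}M_n$, and the requirement $\lam_{1,n}n^{-1/2}n^{\gamma/2-\eta\gamma}\gtrsim q^{1/2}M_n$ has the two factors of $\lam_{1,n}$ cancel, leaving a condition of the advertised form. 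In other words, the new hypothesis is consumed by the \emph{inactive-set} part of the argument, not the active-set part.

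Your proposal asserts the opposite: that the inactive-set check is ``automatic'' because the penalty dominates the noise, and that the whole burden falls on sign preservation over $S_n$. Both halves of this are problematic. For an inactive coordinate $j$, the residual correlation at the restricted solution contains $X_{\cdot j}'X_{S_n}(\hat\beta^{(1)}_n-\beta^{(1)}_n)$, and the shrinkage bias in $\hat\beta^{(1)}_n-\beta^{(1)}_n$ is itself of order $\lam_{1,n}$; comparing this against the right-hand side $\lam_{1,n}\hat w_{j,n}$ cancels the $\lam_{1,n}$'s and leaves a nontrivial exponent condition that does not vanish for large $\lam_{1,n}$ --- this is precisely the constraint the corollary imposes. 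Conversely, your active-set bound silently drops the factor $\lam_{1,n}$ from the penalty contribution (you report it as $n^{\eta\gamma}n^{-(\gamma-1-b)/2}$ with no $\lam_{1,n}$); since the corollary only bounds $\lam_{1,n}$ from below, a $\lam_{1,n}$-free bound of the form ``bias $<\|\beta^{(1)}_n\|$'' cannot be extracted this way, and indeed the paper never proves sign preservation in this regime --- it only localizes $\hat u^{(1)}_n$ in a ball whose radius grows with $\lam_{1,n}$. So although the numerology you arrive at coincides with the corollary's display, the derivation attributes it to the wrong inequality and would not survive being written out in full.
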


Mention that $\eta$ and $\delta$ both are instruments help our proof but not a restraint for adaptive elastic net to select the true variables. For the choice of $\lam_{1,n}$, we should mention that under the setting of the Theorem \ref{thm:1}, $\lam_{1,n}$ is not decay to zero when $n$ tends to infinity. Beyond that, there's no other special constraint on the parameters $\lam_{1,n}$, $\lam_{2,n}$, $q$ and $p$. Therefore, Theorem~\ref{thm:1} shows that adaptive elastic net can select the true variables for most ultra-high dimensional data.

Compared with other penalized least squares, \cite{YuBin06(lasso)} proved that Irrepresentable Condition is almost necessary and sufficient for Lasso to select the true variable both in the classical setting and high-dimensional setting. In this paper, we don't need similar conditions. One of the other improvement of our technical is that, we don't need control the size of $\lam_{1,n}$ and $\lam_{2,n}$ to obtain this property.

Similar, we also can obtain the variable selection consistency for adaptive lasso by using the similar technique in the proof for proving Theorem \ref{thm:1}, which is also an improvement over literatures, e.g. \cite{HuangJian08(adaptive)} proved the variable selection consistency with so many constrains like adaptive Irrepresentable Condition. We prefer adaptive elastic net to adaptive lasso since only $l_1$ penalization method may have poor performance where there are highly correlated variables in the predictor set.

Now we introduce the bounds of bias and MSE of the adaptive elastic net:
\begin{thm}
\label{thm:2}
Assume $\e_{i,n}$ are i.i.d. random variables with mean $0$ and variance $\sigma^2$, under condition (C.1), the following bounds hold,
\begin{align}
||E\hat \beta_{n} - \beta_n||_2^2 \leqslant & 2[1+ 3 P(\hat S_n \neq S_n)] \cdot (Kn^{1-a}+\lam_{2,n})^{-2} \cdot \notag\\
&(\lam_{2,n}^2 ||\beta_n||^2_2 + \lam_{1,n}^2 E||\hat w_{n}||^2_2) \cdot \notag\\
&+ 6 P(\hat S_n \neq S_n) \cdot (||\beta_n||^2_2 + n( n \vee q) ), 
\end{align}
and
\begin{align}
E||\hat \beta_{n}- \beta_n||^2_2  \leqslant & 3[1+ 2 \sqrt{P(\hat S_n \neq S_n)}] \cdot (Kn^{1-a}+\lam_{2,n})^{-2} \cdot \notag\\
&(\lam_{2,n}^2 ||\beta_n||^2_2 + \lam_{1,n}^2 E||\hat w_{n}||^4_2+q\cdot n) \cdot \notag\\
&+ 8 \sqrt{P(\hat S_n \neq S_n)} \cdot (||\beta_n||^2_2+n^2),
\end{align}

For simplicity of presentation, let $\Lambda_{min}(\hat S_n)$ denotes the smallest eigenvalues of $\frac{1}{n} X'_{\hat S_n}X_{\hat S_n}$ and suppose $\Lambda_{min}(\hat S_n) > Kn^{-a}$. Then by choosing suitable parameter we have
\begin{equation}
||E\hat \beta_{n} - \beta_n||_2^2 \rightarrow 0 \ as \ n \rightarrow \infty,
\end{equation}
\begin{equation}
E||\hat \beta_{n}- \beta_n||^2_2  \rightarrow 0 \ as \ n \rightarrow \infty.
\end{equation}
\end{thm}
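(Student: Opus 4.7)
The plan is to condition on the selection event $E := \{\hat S_n = S_n\}$ and bound the contributions from $E$ and its complement separately, using the KKT characterization of the adaptive elastic net on the good event and a crude envelope off of it. On $E$ one has $\hat\beta_n^{(2)}=0$ and the active block satisfies
\[
\hat\beta_n^{(1)} - \beta_n^{(1)} = M_n^{-1}\bigl(X_n(1)'\epsilon_n - \lam_{1,n}\hat w_n^{(1)}\odot\mathrm{sgn}(\hat\beta_n^{(1)}) - \lam_{2,n}\beta_n^{(1)}\bigr),
\]
with $M_n = X_n(1)'X_n(1)+\lam_{2,n}I_q$, so condition (C.1) gives $\|M_n^{-1}\|_{\mathrm{op}}\leq (Kn^{1-a}+\lam_{2,n})^{-1}$, which is the source of the leading $(Kn^{1-a}+\lam_{2,n})^{-2}$ factor in both inequalities.

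For the bias, I would split $E\hat\beta_n - \beta_n = E[(\hat\beta_n - \beta_n)\mathbf{1}_E] + E[(\hat\beta_n - \beta_n)\mathbf{1}_{E^c}]$ and apply $\|a+b\|^2 \leq 2\|a\|^2 + 2\|b\|^2$. On $E$ I substitute the KKT formula, pull the deterministic $M_n^{-1}$ out of the expectation, and control the three resulting pieces: the weight piece yields $\lam_{1,n}^2 E\|\hat w_n\|^2$, the ridge piece yields $\lam_{2,n}^2\|\beta_n\|^2$, and the noise piece uses $E[X_n(1)'\epsilon_n\mathbf{1}_E] = -E[X_n(1)'\epsilon_n\mathbf{1}_{E^c}]$ together with Cauchy--Schwarz and $E\|X_n(1)'\epsilon_n\|^2 \leq qn$, so its contribution is absorbed into the $P(\hat S_n \neq S_n)$ prefactor. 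On $E^c$, the crude bound $\|\hat\beta_n - \beta_n\|^2 \leq 2(\|\hat\beta_n\|^2+\|\beta_n\|^2)$ combined with the objective-value envelope $\lam_{2,n}\|\hat\beta_n\|^2 \leq \|y_n\|^2$ (obtained by testing $\beta=0$) produces the $6P(\hat S_n\neq S_n)(\|\beta_n\|^2 + n(n\vee q))$ remainder.

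For the MSE, I would start from $E\|\hat\beta_n - \beta_n\|^2 = E[\|\hat\beta_n-\beta_n\|^2\mathbf{1}_E] + E[\|\hat\beta_n-\beta_n\|^2\mathbf{1}_{E^c}]$. The first summand, via the KKT formula and $\|a+b+c\|^2 \leq 3(\|a\|^2+\|b\|^2+\|c\|^2)$, gives $3(Kn^{1-a}+\lam_{2,n})^{-2}(\lam_{2,n}^2\|\beta_n\|^2 + \lam_{1,n}^2 E\|\hat w_n\|^2 + qn)$. The second summand is handled by Cauchy--Schwarz, $E[\|\hat\beta_n - \beta_n\|^2\mathbf{1}_{E^c}] \leq \sqrt{E\|\hat\beta_n-\beta_n\|^4}\sqrt{P(E^c)}$, where squaring the KKT expansion and using $E\|\epsilon_n\|^4 = O(n^2)$ produces the $E\|\hat w_n\|^4$ and $n^2$ terms. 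For the convergence conclusion, Theorem~\ref{thm:1} (under the added hypothesis $\Lambda_{\min}(\hat S_n)>Kn^{-a}$) yields $P(\hat S_n\neq S_n) = o(e^{-n^c})$, so any choice of $\lam_{1,n},\lam_{2,n}$ with $\lam_{2,n}^2\|\beta_n\|^2 + \lam_{1,n}^2 E\|\hat w_n\|^2 + qn = o((n^{1-a}+\lam_{2,n})^2)$ drives both bounds to zero.

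The main technical obstacle is the coupling between $\hat w_n$ --- a function of the preliminary lasso estimator, hence of $\epsilon_n$ --- and the noise $X_n(1)'\epsilon_n$ inside the KKT formula; I expect to sidestep it by replacing products with marginal moments via Cauchy--Schwarz, which is why the MSE bound carries a fourth moment $E\|\hat w_n\|^4$ rather than just a second moment. A secondary nuisance is the envelope for $E\|\hat\beta_n\|^4$ off the good event, which implicitly relies on finite fourth moments of $\epsilon_n$ --- tacitly assumed throughout, even though the hypothesis only records mean zero and variance $\sigma^2$.
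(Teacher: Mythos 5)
Your proposal follows essentially the same route as the paper's proof: condition on the selection event, use the closed-form KKT representation $(X_{S_n}'X_{S_n}+\lam_{2,n}I)^{-1}(X_{S_n}'y-\lam_{1,n}\hat w_{S_n}s_n)$ together with the eigenvalue bound from (C.1) for the leading term, and absorb the off-event contributions via Cauchy--Schwarz into $P(\hat S_n\neq S_n)$ (respectively $\sqrt{P(\hat S_n\neq S_n)}$ with fourth moments for the MSE). The only cosmetic differences are that the paper uses a three-term bias decomposition through the ``oracle'' estimator $\hat\beta_{S_n}$ so the noise term vanishes exactly in expectation, and bounds $\|\hat\beta_{\hat S_n}\|$ off the event via the KKT formula and $|\hat S_n|\leqslant n$ rather than your ridge-objective envelope; neither change affects the substance.
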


In the ultra-high dimensional setting, bias is not the only consideration of estimates. Regularization has been a popular technique which results in a reduced MSE. However, if two estimators have the same MSE, we prefer the unbiased one. To the best of our knowledge, above bounds are the smallest one among literatures about penalized least squares. Similar results can hardly obtain in other penalized least squares without the adaptive weights $\hat w$. Hence Theorem \ref{thm:2} makes adaptive elastic net very applicable.

\subsection{Rate of Convergence to the Oracle Distribution}

In this part, we investigate the rate of convergence of adaptive elastic net estimator to the oracle distribution. Let $T_n= \sqrt{n}D_n(\hat \beta_n-\beta_n)$ where $D_n$ is a $p_0 \times p$ matrix with $tr(D_n D'_n)=O(1)$. $p_0$ is an integer which can bigger than $q$ but not depending on $n$. The main result of this part gives upper and lower bound on the accuracy of approximation by the limiting oracle distribution for the adaptive elastic net. To show this property of adaptive elastic net, we need more conditions:

\begin{enumerate}
\item[(C.3)] $\max \{|\beta_{j,n}|: j \in S_n\} = O(1) $ and $\min\{|\beta_{j,n}|: j \in S_n \} \geqslant K n^{-e}$, for some $K \in (0, \infty)$ and $e \in [0, 1/2)$, such that $a+2e \leqslant 1$, where $a$ is set in (C.1)(i).
\item[(C.4)] There exists $m \in (0,1)$ and $n > m^{-1}$.

(i) $\sup \{ \alpha' D^{(1)}_n (C^{-1}_{11,n}(\lam_{2,n})C_{11,n}C^{-1}_{11,n}(\lam_{2,n})) (D_n^{(1)})' \alpha,  \ \ \forall ||\alpha||^2_2 =1    \} < m^{-1}$.,

(ii) $\inf \{ \alpha' D^{(1)}_n (C^{-1}_{11,n}(\lam_{2,n})C_{11,n}C^{-1}_{11,n}(\lam_{2,n})) (D_n^{(1)})' \alpha, \  \ \forall ||\alpha||^2_2 =1    \} > m$,
\end{enumerate}
then
$\frac{\lam_{1,n}}{\sqrt{n}} \leqslant m^{-1} n^{-m} \min \left\{  \frac{n^{-e\gamma}}{q}, \frac{n^{-e\gamma - \frac{a}{2}}}{\sqrt{q}}, n^{-a} \right\}$ and
 $$\frac{\lam_{1,n}}{\sqrt{n}} \cdot n^{\gamma/2} \geqslant m n^m \max \left\{ n^a q, q^{3/2} n^{e(1-\gamma)^{+}} \right\}.$$

(C.3) assumes that the nonzero coefficients are not masked by the estimation error, which makes it possible to separate out the signal from the noise by the adaptive elastic net. The first two bounds of (C.4) require the maximum and the minimum eigenvalues of the $p_0 \times p_0$ matrix are bounded away from zero and infinity. Other two inequalities are applied for the Edgeworth expansion results for the adaptive elastic net estimator.

Then we have the following result:

\begin{thm}
\label{thm:3}
Under conditions (C.1), (C.3) and (C.4), choose suitable $\lam_{2,n}$ to make the smallest eign-values of $C_{11,n}(\lam_2)$ greater than $K n^{-a}$ and assume that $\max\limits_{1 \leqslant j \leqslant q} c^{j,j}_{11,n}=O(1)$ where $c^{j,j}_{11,n}$ is the $(j,j)$th element of $C^{-1}_{11,n}$. Then the rate of convergence to the oracle distribution can be given as follow
\begin{equation}
\sup\limits_{B \in C_{p_0}} | P(T_n \in B)- \Phi (B, \sigma^2 Q_n )| = O(n^{-1/2} +||b_n|| + \lam_n n^{a+e(\gamma+1)/ n}),
\end{equation}
where $b_n = D^{(1)}_n C^{-1}_{11,n}(\lam_{2,n})s^{(1)}_n$,  $s^{(1)}_n$ is a $q \times 1$ vector with $j$th component $s_{j,n} = sign( \beta_{j,n})|\beta_{j,n}|^{- \gamma}$, $1 \leqslant j \leqslant q$. 
\end{thm}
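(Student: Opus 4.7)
The plan is to condition on the oracle event $\{\hat S_n = S_n\}$, where Theorem~\ref{thm:1} gives $P(\hat S_n \neq S_n) = o(e^{-n^c})$, which is negligible compared with $n^{-1/2}$. On this event the KKT conditions for the adaptive elastic net reduce to the true $q$-dimensional submodel and the estimator is available in closed form:
\begin{equation}
\hat \beta_n^{(1)} - \beta_n^{(1)} = C_{11,n}(\lam_{2,n})^{-1}\Big[\frac{1}{n}X_n(1)'\e_n - \frac{\lam_{2,n}}{n}\beta_n^{(1)} - \frac{\lam_{1,n}}{n}\hat s_n^{(1)}\Big],
\end{equation}
where $C_{11,n}(\lam_{2,n}) = C_{11,n} + (\lam_{2,n}/n)I_q$ and $\hat s_n^{(1)}$ has $j$th entry $\mathrm{sign}(\hat\beta_{j,n})|\tilde\beta_{j,n}|^{-\gamma}$. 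Multiplying by $\sqrt{n}\,D_n^{(1)}$ splits $T_n$ into a stochastic linear part $L_n = D_n^{(1)}C_{11,n}(\lam_{2,n})^{-1}X_n(1)'\e_n/\sqrt{n}$ and two bias pieces arising from $\lam_{2,n}\beta_n^{(1)}$ and $\lam_{1,n}\hat s_n^{(1)}$.

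For $L_n$, the summands are independent centered vectors in $\mR^{p_0}$ with covariance exactly $\sigma^2 Q_n$, where $Q_n = D_n^{(1)}C_{11,n}^{-1}(\lam_{2,n})C_{11,n}C_{11,n}^{-1}(\lam_{2,n})(D_n^{(1)})'$. Condition (C.4) guarantees that the eigenvalues of $Q_n$ are bounded away from $0$ and $\infty$; combined with $\max_j c^{j,j}_{11,n} = O(1)$ and the row-norm bound in (C.1), this gives a uniform bound on the third absolute moments of the summands. A multivariate Berry-Esseen bound over the convex class $C_{p_0}$ (Bentkus-type) then delivers $\sup_{B \in C_{p_0}}|P(L_n \in B) - \Phi(B,\sigma^2 Q_n)| = O(n^{-1/2})$, which accounts for the first term in the stated rate.

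It remains to compare the two bias pieces with the oracle quantities. On $\{\hat S_n = S_n\}$ the signs of $\hat\beta_{j,n}$ and $\beta_{j,n}$ agree, so $\hat s_n^{(1)}$ and the deterministic $s_n^{(1)}$ differ only through $|\tilde\beta_{j,n}|^{-\gamma}$ versus $|\beta_{j,n}|^{-\gamma}$; applying the mean-value theorem to $x \mapsto x^{-\gamma}$ together with the lasso estimation rate for $\tilde\beta_n$ (Lemma~2 of the Appendix) and the lower bound $\min_j|\beta_{j,n}| \geqslant Kn^{-e}$ from (C.3) controls $\|\hat s_n^{(1)} - s_n^{(1)}\|$ at a rate of order $n^{e(\gamma+1)}$ times the lasso error, which after multiplication by $\lam_{1,n}/\sqrt{n}$ produces the $\lam_n n^{a+e(\gamma+1)/n}$ remainder. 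Substituting $s_n^{(1)}$ in place of $\hat s_n^{(1)}$ identifies the dominant $\lam_{1,n}$-bias as $-(\lam_{1,n}/\sqrt{n})b_n$, while the $\lam_{2,n}$-bias is absorbed by the eigenvalue calibration of $\lam_{2,n}$ prescribed in the theorem. The main obstacle I expect is the multivariate Berry-Esseen step under non-identically distributed summands with the $n$-dependent linear map $D_n^{(1)}C_{11,n}(\lam_{2,n})^{-1}$: one must exploit (C.4) so that $Q_n$ is non-degenerate and bounded, then apply the Bentkus-type bound to the normalized vector before pulling the covariance back out, all the while decoupling the randomness of $\hat s_n^{(1)}$ from $\e_n$ by conditioning on $\tilde\beta_n$.
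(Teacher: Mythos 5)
Your overall architecture matches the paper's: condition on the oracle event (negligible by Theorem~\ref{thm:1}), write the estimator in closed form on the true submodel, identify $b_n = D_n^{(1)}C_{11,n}^{-1}(\lam_{2,n})s_n^{(1)}$ as the dominant $\lam_{1,n}$-bias, and use the derivative of $x\mapsto x^{-\gamma}$ together with (C.3) and the lasso rate to convert the discrepancy between $|\widetilde{\beta}_{j,n}|^{-\gamma}$ and $|\beta_{j,n}|^{-\gamma}$ into the $\lam_{1,n}n^{a+e(\gamma+1)}/n$ remainder (the paper encodes this derivative in the diagonal matrix $R_n^{(1)}$ with entries $|\beta_{j,n}|^{-(\gamma+1)}$). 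The technical device differs: you propose a Bentkus-type multivariate Berry--Esseen bound for the purely linear part $L_n$, whereas the paper Taylor-expands the whole statistic $T_n$ (including the random weight term) into $T_{1,n}$, invokes an Edgeworth expansion for $T_{1,n}$ following Bhattacharya and Ranga Rao, and then compares the Edgeworth approximant to $\Phi(\cdot,\sigma^2 Q_n)$ by bounding $\|\widetilde{Q}_n-Q_n\|$, where $\widetilde{Q}_n$ is the actual variance of $T_{1,n}$.

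The genuine gap is your final step: ``decoupling the randomness of $\hat s_n^{(1)}$ from $\e_n$ by conditioning on $\widetilde{\beta}_n$'' does not work, because $\widetilde{\beta}_n=\hat\beta_{lasso}$ is built from the same noise vector $\e_n$. Conditioning on $\widetilde{\beta}_n$ changes the law of $\e_n$, so under that conditioning the summands of $L_n$ are no longer independent mean-zero vectors with covariance $\sigma^2 Q_n$, and the Berry--Esseen bound cannot be applied as if $\hat s_n^{(1)}$ were deterministic. This dependence is exactly why the paper works with $\widetilde{Q}_n$ rather than $Q_n$: the cross-term between $W_n^{(1)}$ and $\widetilde{s}_n^{(1)}$ perturbs the covariance by $O(\lam_{1,n}n^{a+e(\gamma+1)}/n)$, and absorbing it requires comparing two Gaussian (Edgeworth) approximants with different covariance matrices, not merely recentering by a deterministic bias. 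To repair your route you would need an explicit bound on this cross-covariance or a real decoupling device (e.g.\ computing $\widetilde{\beta}_n$ from an independent sample), neither of which appears in the proposal. A minor shared caveat: both routes require moments of $\e$ beyond the second (third absolute moments for Berry--Esseen; more, plus a smoothness condition, for a valid Edgeworth expansion), which the theorem's stated hypotheses do not provide.
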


Theorem \ref{thm:3} indicates that the adaptive elastic net has a bias may lead to an inferior rate converging to the limiting normal distribution. The rate critically depends on the choices of the parameters.

\section{SSLS}

Compared with the adaptive elastic net, this section proposes a valid inference procedure for both selection and estimation. We propose SSLS (Separate Selection from Least Squares) to improve the accuracy of prediction and fitting result and  show that: (i) SSLS's biases decays at an exponential rate, which much faster than original penalized least squares. Also, the MSE of SSLS can achieve at the oracle rate. (ii) We already know that adaptive elastic net has a bias of rate of convergence to the oracle distribution. In this part, SSLS estimator is proved have asymptotic normality. (iii) Furthermore, simulation and empirical part show that SSLS have much smaller fitted and predicted error compared with other methods.

Similar setting as above, let $\hat S_n= \{j \in \{ 1,2,...,p \} : \hat \beta_{j,n} \neq 0  \}$ where $\hat \beta_n$ is the adaptive elastic net estimator. Then we use OLS to estimate the surplus low-dimension set as
\begin{equation}
\hat \beta_n(ssls)= \mathop{argmin}\limits_{\beta_{\hat S_n^c}=0} ||y_n -X_n \beta||^2_2.
\end{equation}

$\hat S_n$ is obtained by the variable selection method (adaptive elastic net in this paper). When the first part of SSLS get variable selection consistency under conditions, SSLS clearly achieve the variable selection consistency. We show the follow result for SSLS using the adaptive elastic net as the first step. 
\begin{coro}
Assume $\e_{i,n}$ are i.i.d. random variables with mean $0$ and variance $\sigma^2$, under condition (C.1)...., the adaptive elastic net has variable selection consistency. That is
\begin{equation}
P( \hat S_n =S_n) \leqslant 1 - o(e^{-n^c}) \rightarrow 1 \ as \ n \rightarrow \infty.
\end{equation}

Follow the definition of SSLS estimator $\hat \beta_n(ssls) $, $\hat \beta_n(ssls) $ has the variable selection consistency too.
\end{coro}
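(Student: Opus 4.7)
The corollary is two statements packaged together, and my plan is to dispatch each by a direct reduction to earlier material. The first assertion, that $P(\hat S_n = S_n)\geqslant 1-o(e^{-n^c})$ for the adaptive elastic net, is literally the conclusion of Theorem~\ref{thm:1}: the clause ``under condition (C.1)\ldots'' is meant to collect that theorem's hypotheses (conditions (C.1)--(C.2), the i.i.d.\ assumption on $\epsilon_{i,n}$, the tuning window $\lambda_{1,n}<K(\delta,\lambda_{2,n})n^{\eta+a+b}$, and the bounds relating $\eta$, $b$, and $c$), so citing the theorem yields the first displayed inequality with no additional work.

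For the second assertion, I would use the construction of SSLS to transfer consistency from the first-stage mask. Write $\hat S_n$ for the adaptive-elastic-net support and $\hat S_n^{(ssls)}\equiv\{j:\hat\beta_{j,n}(ssls)\neq 0\}$. Because the SSLS definition sets $\hat\beta_{j,n}(ssls)=0$ whenever $j\notin \hat S_n$, we have $\hat S_n^{(ssls)}\subseteq \hat S_n$ deterministically. On the event $\{\hat S_n=S_n\}$ the second-stage regression reduces to an OLS fit of $y_n$ on the $q$ true active columns $X_n(1)$; by (C.1) the Gram matrix $X_n(1)'X_n(1)$ has smallest eigenvalue bounded below by $Kn^{1-a}$, so the OLS estimator exists uniquely and equals $\beta^{(1)}_n+(X_n(1)'X_n(1))^{-1}X_n(1)'\epsilon_n$. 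I would then control the remainder in sup-norm on the same envelope event $A_n=\{\|W_n\|_\infty\leqslant Kn^{\eta}\}$ used by Theorem~\ref{thm:1}: combining the eigenvalue bound with $\|X_n(1)'\epsilon_n\|_2\leqslant\sqrt{q}\cdot\sqrt{n}\cdot Kn^{\eta}$ gives a deviation of order $n^{a+\eta+b/2-1/2}$, which is forced below the signal floor $\min_{j\in S_n}|\beta_{j,n}|\geqslant Kn^{-e}$ once the parameters satisfy the gap $a+2e+b<1-2\eta$ implicit in the regime of Theorem~\ref{thm:1} together with (C.3). A union bound over $\{\hat S_n\neq S_n\}$ and $A_n^c$ then preserves the exponential rate and gives $P(\hat S_n^{(ssls)}\neq S_n)=o(e^{-n^c})$.

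The main obstacle is precisely that last gap estimate: the second-stage OLS must not accidentally drive a truly nonzero coordinate to zero, and this hinges on the interplay between the eigenvalue bound in (C.1), the minimum-signal condition (C.3), and the sparsity rate $q=O(n^b)$. If one adopts the weaker convention that SSLS ``selects'' via the mask $\hat S_n$ rather than via the sign of the OLS estimates, the second assertion is immediate from the first and no signal-strength step is needed; I would state the stronger form, since it is the one that supports the asymptotic-normality claim promised for SSLS in the next subsection.
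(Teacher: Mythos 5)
Your reduction is correct and, for the reading the paper intends, coincides with its (essentially one-line) argument: the first display is exactly the conclusion of Theorem~\ref{thm:1}, and since SSLS by construction sets $\hat\beta_{j,n}(ssls)=0$ for every $j\notin\hat S_n$ and refits only on the columns indexed by $\hat S_n$, its selected set is the adaptive elastic net's set, so $P(\hat S_n=S_n)\to 1$ transfers verbatim. The paper goes no further than this --- there is no separate proof of the corollary in the appendix, only the remark that SSLS ``clearly'' inherits the consistency of its first stage. Your additional ``strong form'' --- verifying that on $\{\hat S_n=S_n\}$ the second-stage OLS does not annihilate a truly nonzero coordinate, so that $\{j:\hat\beta_{j,n}(ssls)\neq 0\}$ itself equals $S_n$ --- is a genuine strengthening the paper does not attempt, and your outline (the eigenvalue bound from (C.1), the envelope event $A_n$, and a signal floor) has the right shape; it is also the version actually needed to make the later asymptotic-normality and Theorem~\ref{thm:final} arguments honest, since those condition on $\{\hat S_n = S_n\}$ and use the OLS representation there. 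Two caveats on that extra step: it imports the minimum-signal condition (C.3), which is not among the corollary's stated hypotheses, so it cannot be offered as a proof of the corollary as written; and the exponent bookkeeping is slightly off --- on $A_n$ one gets $\|(X_n(1)'X_n(1))^{-1}X_n(1)'\epsilon_n\|_2 \leqslant K^{-1}n^{a-1}\cdot\sqrt{q}\sqrt{n}\,Kn^{\eta}=O(n^{a+b/2+\eta-1/2})$, so the sufficient gap condition is $2a+b+2e+2\eta<1$ rather than $a+2e+b<1-2\eta$. Neither caveat affects the corollary itself, which, as you correctly identify, is immediate under the paper's convention that SSLS selects via the mask $\hat S_n$.
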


Using the same notations as above, we show asymptotic normality of SSLS as follow.
\begin{thm}
\label{thm:4}
Assume $\e_{i,n}$ are i.i.d. random variables with mean $0$, variance $\sigma^2$ and $E[|\e|^{2+\delta}] < \infty $ for some $\delta >0$. Under condition (C.1), the variable selection of adaptive elastic net holds. Let $\Sigma_{S_n}=X'_{S_n}X_{S_n}$ and $\lim\limits_{n \rightarrow \infty} \Sigma_{S_n}^{-1} \cdot \max\limits_{i=1,...,n} \sum\limits_{j=1}^{q}x_{ij}^2=0$. Then SSLS are asymptotically normal, that is,
\begin{equation}
\alpha' \Sigma^{1/2}_{S_n}(\hat \beta_n(ssls)- \beta_n) \rightarrow_d N(0,\sigma^2),
\end{equation}
where $ \alpha$ is a vector of norm $1$.
\end{thm}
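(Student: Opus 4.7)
The plan is to exploit the variable selection consistency of the first step (Corollary 2) to reduce the SSLS asymptotics to a classical central limit theorem for ordinary least squares on the true support. On the event $E_n := \{\hat S_n = S_n\}$, the defining least squares problem makes $\hat\beta_n(\mathrm{ssls})$ coincide with the OLS estimator for the low-dimensional regression of $y_n$ on $X_{S_n}$, with the remaining coordinates being identically zero (as are the corresponding entries of $\beta_n$). This yields the identity
\begin{equation*}
\Sigma_{S_n}^{1/2}\bigl(\hat\beta_n(\mathrm{ssls})-\beta_n\bigr) = \Sigma_{S_n}^{-1/2}X'_{S_n}\epsilon_n \qquad \text{on } E_n,
\end{equation*}
and hence
\begin{equation*}
\alpha'\Sigma_{S_n}^{1/2}\bigl(\hat\beta_n(\mathrm{ssls})-\beta_n\bigr) = \sum_{i=1}^{n} w_{i,n}\,\epsilon_{i,n}, \qquad w_{i,n} := \alpha'\Sigma_{S_n}^{-1/2}x_{i,S_n},
\end{equation*}
with $\sum_i w_{i,n}^2 = \alpha'\Sigma_{S_n}^{-1/2}\Sigma_{S_n}\Sigma_{S_n}^{-1/2}\alpha = ||\alpha||^2 = 1$. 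This is the key reduction: the problem becomes a weighted CLT for independent innovations whose total variance equals exactly $\sigma^2$.

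To establish the CLT for the linear statistic $Z_n := \sum_i w_{i,n}\epsilon_{i,n}$, I would verify the Lyapunov condition of order $2+\delta$, exploiting the moment hypothesis $E|\epsilon|^{2+\delta}<\infty$:
\begin{equation*}
\sum_{i=1}^n E|w_{i,n}\epsilon_{i,n}|^{2+\delta} \leqslant E|\epsilon|^{2+\delta}\cdot \Bigl(\max_{1\leqslant i\leqslant n}|w_{i,n}|\Bigr)^{\delta}\cdot \sum_{i=1}^n w_{i,n}^2 = E|\epsilon|^{2+\delta}\cdot \Bigl(\max_{i}|w_{i,n}|\Bigr)^{\delta},
\end{equation*}
so it suffices to prove $\max_i |w_{i,n}|\to 0$. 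Cauchy--Schwarz gives
\begin{equation*}
|w_{i,n}|^{2} \leqslant ||\alpha||^{2}\cdot x'_{i,S_n}\Sigma_{S_n}^{-1}x_{i,S_n} \leqslant ||\Sigma_{S_n}^{-1}||_{\mathrm{op}}\cdot \sum_{j=1}^{q} x_{ij}^{2},
\end{equation*}
and the design hypothesis $\Sigma_{S_n}^{-1}\cdot\max_i\sum_{j=1}^q x_{ij}^2 \to 0$ (read in operator norm) therefore forces $\max_i|w_{i,n}|\to 0$. Lyapunov's theorem then delivers $Z_n \rightarrow_d N(0,\sigma^2)$.

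The main obstacle, though conceptually mild, is the passage from this conditional statement on $E_n$ to unconditional weak convergence. For any bounded continuous $f$, Corollary 2 gives
\begin{equation*}
\Bigl| E f\bigl(\alpha'\Sigma_{S_n}^{1/2}(\hat\beta_n(\mathrm{ssls})-\beta_n)\bigr) - Ef(Z_n)\Bigr| \leqslant 2\,||f||_\infty\, P(E_n^c) \to 0,
\end{equation*}
so the two sequences share the same limiting distribution and a Slutsky-type argument closes the proof. The delicate point that remains is to ensure the operator-norm reduction $x'_{i,S_n}\Sigma_{S_n}^{-1}x_{i,S_n}\leqslant ||\Sigma_{S_n}^{-1}||_{\mathrm{op}}\,||x_{i,S_n}||_2^2$ is compatible with the precise form of the design condition stated in the theorem; once that interpretation is fixed, the remainder of the argument is routine and no heavy calculation is needed.
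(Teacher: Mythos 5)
Your proposal is correct and follows essentially the same route as the paper's own proof: condition on $\{\hat S_n = S_n\}$ to reduce SSLS to OLS on the true support, represent $\alpha'\Sigma_{S_n}^{1/2}(\hat\beta_n(\mathrm{ssls})-\beta_n)$ as a weighted sum $\sum_i w_{i,n}\epsilon_{i,n}$ with unit total variance, verify the Lyapunov condition of order $2+\delta$ via the bound $w_{i,n}^2 \leqslant \|\Sigma_{S_n}^{-1}\|\sum_j x_{ij}^2$ and the stated design condition, and remove the conditioning using the exponential decay of $P(\hat S_n \neq S_n)$. If anything, your explicit treatment of the conditional-to-unconditional passage via bounded continuous test functions is tidier than the paper's probability-inequality sketch.
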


Theorem \ref{thm:4} states that the variable selection consistency of adaptive elastic net implies the asymptotic normality of SSLS estimator. Finally, we provide the general bounds for bias and MSE:
\begin{thm}
\label{thm:final}
Assume $\e_{i,n}$ are i.i.d. random variables with mean $0$ and variance $\sigma^2$, under condition (C.1), then the bias and MSE of SSLS estimator satisfy
\begin{equation}
||E\hat \beta_n (ssls)- \beta_n||^2_2 \leqslant 2 P ( \hat S_n \neq S_n  ) ( 2||\beta_n||^2_2 + \sigma^2K^{-1} n^{a-1} + \sigma^2 K^{-1} n^{a-1} n \vee q ) ,
\end{equation}
\begin{align}
E||\hat \beta_n (ssls)- \beta_n||^2_2
& \leqslant \sigma^2K^{-1} n^{a+b-1} + 8 \sqrt{P(\hat S_n \neq S_n)} (||\beta_n||^2_2 + \sigma^2  \cdot K^{-1}n^a ).
\end{align}
\end{thm}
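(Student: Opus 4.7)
The plan is to introduce the deterministic-support oracle OLS estimator $\hat\beta^*_n$, defined by $\hat\beta^{*}_{j,n}=0$ for $j \notin S_n$ and by $(X'_{S_n}X_{S_n})^{-1}X'_{S_n}y_n$ on $S_n$, and then benchmark SSLS against it. Because $S_n$ is deterministic while $\e_n$ has mean zero, $\hat\beta^*_n$ is unconditionally unbiased with squared-norm risk $\sigma^2 \mathrm{tr}[(X'_{S_n}X_{S_n})^{-1}]$. Since $\hat\beta_n(ssls)$ agrees with $\hat\beta^*_n$ on $\{\hat S_n = S_n\}$, the difference $\Delta_n := \hat\beta_n(ssls) - \hat\beta^*_n$ is supported on $\{\hat S_n \neq S_n\}$, so the identity
$$\hat\beta_n(ssls) - \beta_n = (\hat\beta^*_n - \beta_n) + \Delta_n$$
is the single engine behind both bounds.

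For the bias, taking expectations annihilates the oracle term and leaves $E[\hat\beta_n(ssls) - \beta_n] = E[\Delta_n \mathbf{1}_{\{\hat S_n \neq S_n\}}]$. A direct Cauchy--Schwarz step then gives $||E[\hat\beta_n(ssls) - \beta_n]||_2^2 \leqslant P(\hat S_n \neq S_n)\cdot E||\Delta_n||_2^2$. I would control $E||\Delta_n||_2^2$ via the triangle inequality together with the eigenvalue assumption in Theorem~\ref{thm:2} ($\Lambda_{\min}(\hat S_n) > Kn^{-a}$, with the analogous bound on $C_{11,n}$ from (C.1)): the operator-norm bound $||(X'_{\hat S_n}X_{\hat S_n})^{-1}|| \leqslant K^{-1}n^{a-1}$ dominates the two OLS variances by $\sigma^2 K^{-1}n^{a-1}|\hat S_n|$ and $\sigma^2 K^{-1}n^{a-1}q$, and using $|\hat S_n|\leqslant n$ produces the $(n \vee q)$ factor. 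Picking up a $2||\beta_n||_2^2$ term for the projection part and collecting constants yields the stated bias bound.

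For the MSE I split on the same event. On $\{\hat S_n = S_n\}$, SSLS coincides with $\hat\beta^*_n$, so its contribution is at most $E||\hat\beta^*_n - \beta_n||_2^2 = \sigma^2 \mathrm{tr}[(X'_{S_n}X_{S_n})^{-1}] \leqslant q\sigma^2/(Kn^{1-a}) = \sigma^2 K^{-1} n^{a+b-1}$, using (C.1) and $q = O(n^b)$; this matches the first summand exactly. On $\{\hat S_n \neq S_n\}$, I apply Cauchy--Schwarz in its fourth-moment form, $E[Z\mathbf{1}_A] \leqslant \sqrt{E Z^2}\sqrt{P(A)}$ with $Z = ||\hat\beta_n(ssls) - \beta_n||_2^2$, so that the required $\sqrt{P(\hat S_n \neq S_n)}$ factor appears. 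A crude bound $E Z^2 \leqslant (2||\beta_n||_2^2 + 2\sigma^2 K^{-1}n^a)^2$, in which $\sigma^2 K^{-1}n^a$ is the worst-case OLS variance of $|\hat S_n|\leqslant n$ coordinates whose Gram matrix has smallest eigenvalue at least $Kn^{-a}$, together with bookkeeping of the constant $8$, produces the second summand.

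The main obstacle I anticipate is the failure of independence between the selection event $\{\hat S_n = S_n\}$ and the noise $\e_n$: the adaptive elastic net uses $\e_n$ to form $\hat S_n$, so any attempt to condition on the good event and declare the subsequent OLS step conditionally unbiased would fail. The detour through $\hat\beta^*_n$ is precisely what sidesteps this, because $\hat\beta^*_n$ is defined using the deterministic oracle support $S_n$ and is therefore unconditionally unbiased; all of the awkward dependence is quarantined inside $\Delta_n$ and paid for by the $P(\hat S_n \neq S_n)$ or $\sqrt{P(\hat S_n \neq S_n)}$ prefactor. A secondary technical point is the uniform eigenvalue lower bound on $X'_{\hat S_n}X_{\hat S_n}$ for the random set $\hat S_n$, which is exactly the extra hypothesis imported in the statement of Theorem~\ref{thm:2} and which I would invoke throughout.
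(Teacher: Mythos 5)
Your proposal is correct and follows essentially the same route as the paper's proof: both hinge on comparing SSLS to the oracle OLS estimator on the deterministic support $S_n$ (your $\hat\beta^*_n$ is the paper's $\hat\beta_{S_n}(ssls)$, which the paper likewise notes is exactly unbiased), confining all selection-noise dependence to a remainder supported on $\{\hat S_n \neq S_n\}$, and extracting the $P(\hat S_n \neq S_n)$ and $\sqrt{P(\hat S_n \neq S_n)}$ prefactors by Cauchy--Schwarz, with the eigenvalue bounds from (C.1), $\Lambda_{\min}(\hat S_n) > Kn^{-a}$, and $|\hat S_n| \leqslant n$ from Lemma 1 supplying the same $\sigma^2 K^{-1}n^{a+b-1}$ and $n \vee q$ terms. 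The only difference is cosmetic bookkeeping (one Cauchy--Schwarz on $E[\Delta_n]$ versus the paper's three-term triangle inequality), so no further comparison is needed.
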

Theorem \ref{thm:final} states that the bias of SSLS estimator decays at an exponential rate. Considering the MSE of SSLS estimator, $P(\hat S_n \neq S_n)$ decays at an exponential rate, hence it is completely determined by $\sigma^2 K^{-1} n^{a+b-1}$ which corresponds to the oracle convergence rate and cannot be improved any more.

\section{Computations}

In this section we discuss the computational issues about SSLS. We use adaptive elastic net to select the variables, hence the first half computation of SSLS is solve the adaptive elastic net estimator by LARS algorithm \cite{Efron}. The computation details are given as follow which omit the proof.
\\

\textbf{Algorithm 1} (The algorithm for the SSLS)

\begin{enumerate}
\item Given $y_n$, $X_n$ and $\lam_{2,n}$, define the predictor matrix

$$\widetilde{X}_n= \left[ {\begin{array}{*{20}c}
   X_n\\
   \lam_2 I  \\
\end{array}} \right] \in \mR^{(n+p) \times p},$$
and
$$\widetilde{y}_n = (y_n,0) \in \mR^{n+p}.$$

\item Let
$$\widetilde{X}_{j,n}(ada) = \widetilde{X}_j \times |\widetilde{\beta}_{j,n}|^{\gamma}, \ \  where \  \ \widetilde{\beta}_{j,n}\ is \ the \ initial \ estimator$$

\item Apply LARS algorithm to choose the nonzero coefficient set $\hat S_n$ by data $\widetilde{X}_n(ada)$ and $\widetilde{y}_n$.

\item Assume the linear regression model
$$y_n=X_n \beta_{\hat S_n}+ \epsilon,$$
where $\beta_{\hat S_n}=\left\{  \beta_{j,n}, j \in \hat S_n \right\}$, and solve the OLS estimator $\hat \beta_{\hat S_n} (ols)$.
\end{enumerate}

After transform $X_n$ and $y_n$ into $\widetilde{X}_{j,n}(ada)$ and $\widetilde{y}_n$, the LARS algorithm is used to compute the solution path in step 3. It is a popular and efficient algorithm hence we used in this paper.

The final step is easy but important. The estimator $\hat \beta_{\hat S_n}(ols)$ obtained by OLS estimation can get small error as much as possible, and the solution is also sparse since we get the sparse active set $\hat S_n$ in the previous step.

\section{Simulation}

Through simulations we investigate the performance of adaptive elastic net and SSLS, starting with the comparison between the adaptive lasso and lasso with Irrepresentable Condition holds or not, and then considering the performance of SSLS compared with others.

We only give a simple high-dimensional setting example in the simulation part since after this part we also investigate the performance of the SSLS which applying into the financial field compared with the traditional penalized least square method. The empirical analysis part can be seen as a more challenging scenarios.

\subsection{Adaptive Elastic Net}

To assess the performance of the adaptive elastic net estimator, we simulate data from the linear regression model
\begin{equation}
y= X' \beta + \e,
\end{equation}
where $p=200$, $n=100$ and the true regression coefficients are set as follow
\begin{equation}
\beta=\{9,6,0,....,0\},
\end{equation}
where only the first two items are nonzero. We generate i.i.d. random variables $X_{i,1},....,X_{i,199}$, $e_i$ and $\e_i$ from Gaussian distribution with mean $0$ and variance $\sigma^2$ for simplicity of presentation. $X_{i,200}$ is generated as
\begin{equation}
\label{eq:x2}
X_{i,p}= 1/6X_{i,1}+ 5/6X_{i,2}+1/2 X_{i,3}+1/6 e_{i}.
\end{equation}

According to the notations above, setting $\lam_2=1000$ and $\widetilde{\beta} =X'y/n$. We get different solution paths from the lasso and adaptive elastic net (as illustrated by Figure \ref{fig:12}). One can easily obtain that this setting doesn't satisfy the Irrepresentable Condition and hence lasso cannot select variables correctly in Figure \ref{fig:2}. As a contrast, Figure \ref{fig:1} shows the adaptive elastic net path correctly selects the true variables.

\begin{figure}[!ht]
\centering
\subfigure[Adaptive Elastic Net]{\label{fig:1}
\includegraphics[width=0.3\textwidth]{./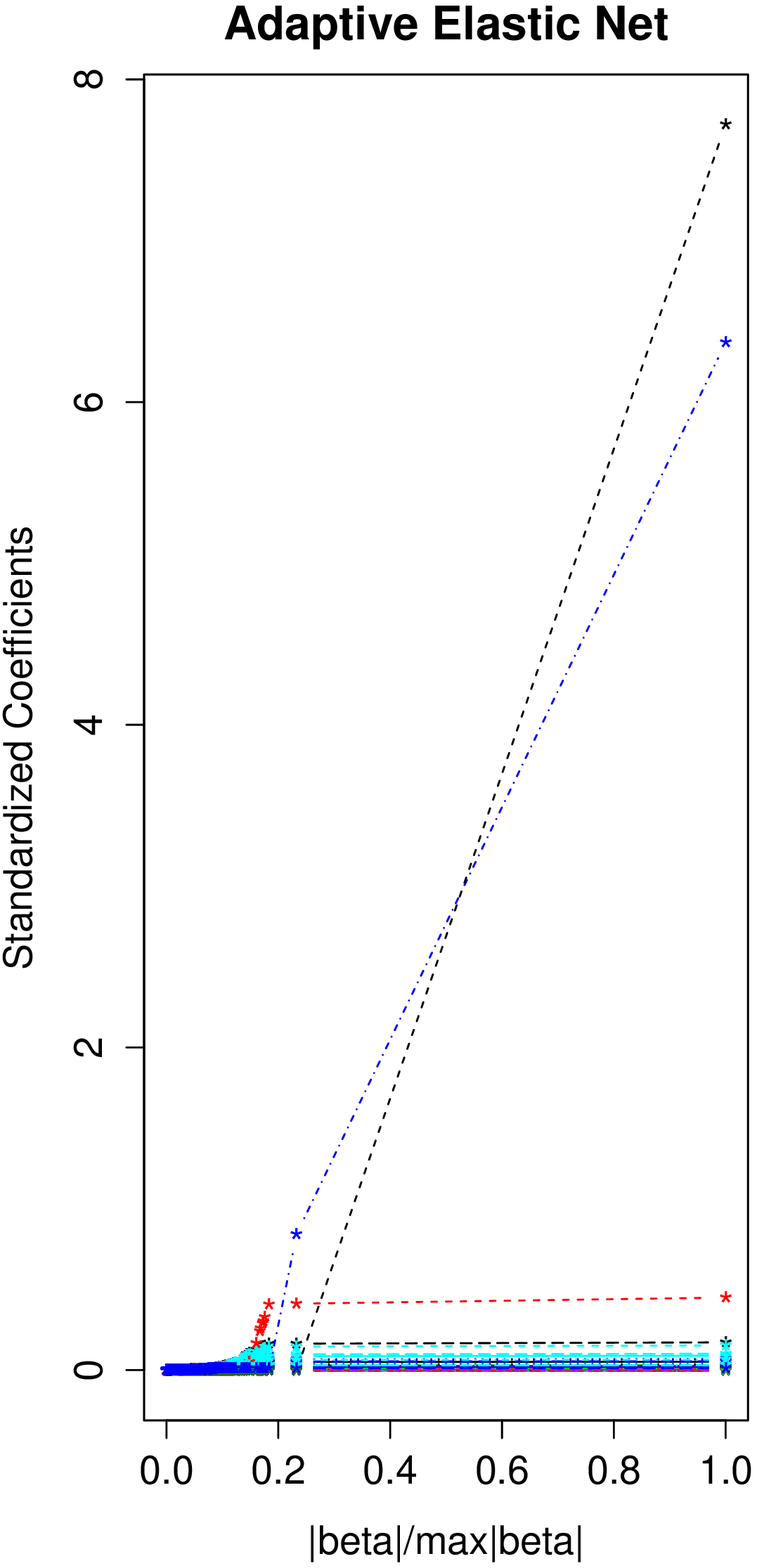}}
\subfigure[Lasso]{\label{fig:2}
\includegraphics[width=0.3\textwidth]{./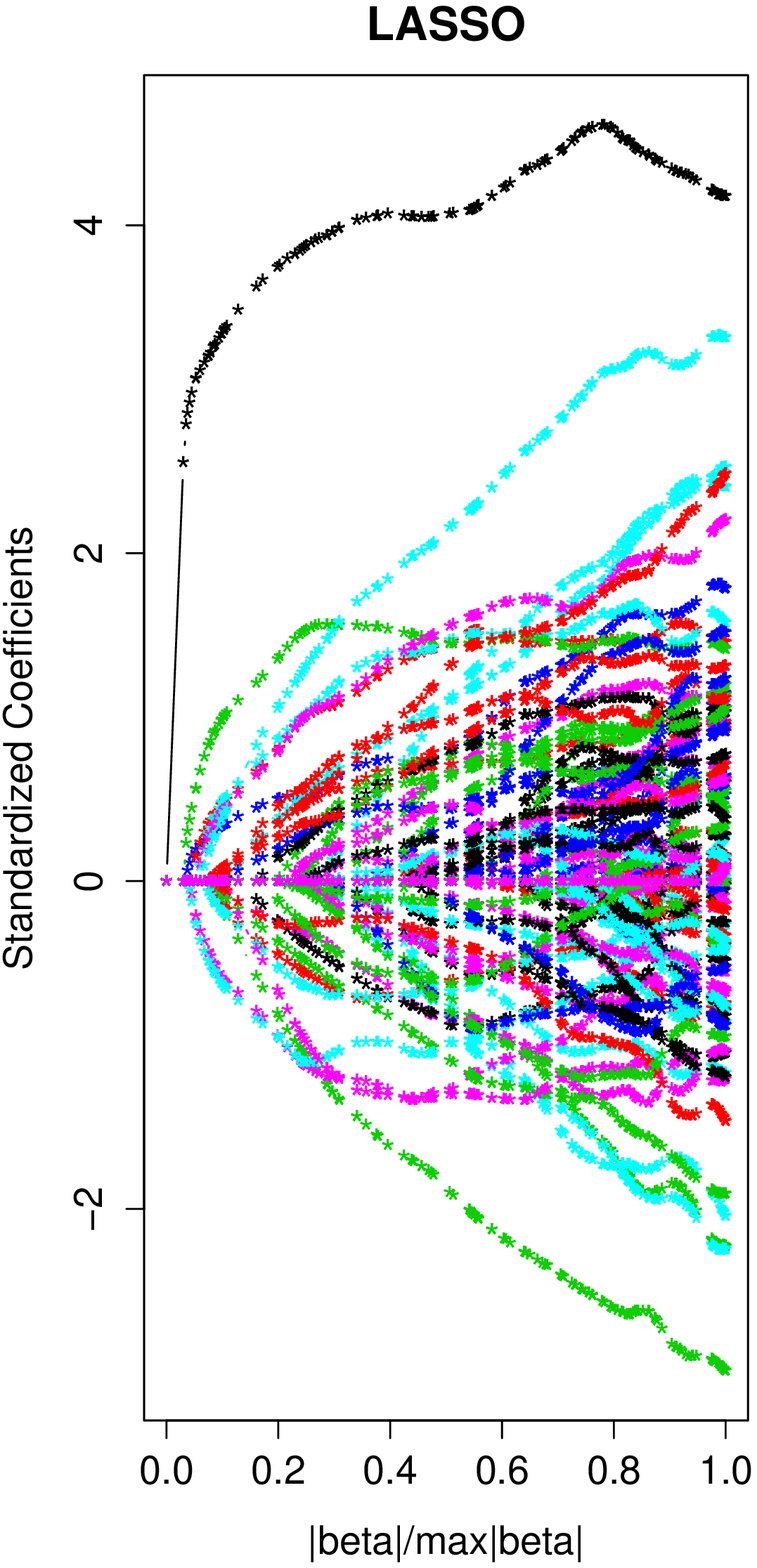}}
\caption{The adaptive elastic net solution path and the lasso solution path when Irrepresentable Condition fails}
\label{fig:12}
\end{figure}

\subsection{SSLS}

To assess the performance of the SSLS estimator and compare it with other methods, we implement five methods under two different dimensional settings (low dimensional setting vs high dimensional setting):
\begin{enumerate}
\item Lasso, the penalized least squares estimator with $l_1$ penalty proposed by \cite{Tibshirani1996}.
\item Elastic net,  the least squares estimator with both $l_1$ and $l_2$ penalty \cite{Zou2005(elastic)}.
\item Adaptive lasso, penalized least squares method with an adaptive data-driven weights \cite{Zou06(adaptive)}.
\item Adaptive elastic net, a combination of elastic net and the adaptive lasso \cite{zou2009adaptive}.
\item SSLS, separate selection from least squares which defined in Section 3.
\end{enumerate}

We simulate data from linear regression model with fixed true regressions as $\beta=\{9,6,0,....,0\}$ no matter low dimension ($p=400$, $n=100$) or high dimension ($p=10$, $n=100$). $X$ is generated from $N(0,\Sigma)$. Correlation of the covariates matrices $\Sigma$ are chosen to be (1) identity ($\Sigma=I$) and (2) generated with correlation $\rho=0.5$, $\Sigma_{i,j}=0.5^{|i-j|}$. We choose suitable tuning parameter for elastic net and adaptive elastic net to select variables for SSLS. $\lam_{2,n}$ is selected in 20 different values and we find that relatively small values (like 0.01, 0.0001 and so on) for $\lam_{2,n}$ lead to a better prediction result than larger one (like 10, 100 and so on).

Two measures are calculated: (1) $l_1$ loss: $||\hat \beta - \beta||_1$ and (2) $l_2$ loss: $||\hat \beta-\beta||_2^2$. For each design, we run the simulation 100 times and present the average of the performance measure. For simplicity of presentation, we write AEN for adaptive elastic net in the table. As depicted in Table \ref{table:1}, one should compare the performance between each method. This comparison reflects the effectiveness of SSLS deals with whether low dimensional or high dimensional setting. Furthermore, comparing the behavior of each method in each design.

It is seen that SSLS has the best performance among others in all of four settings. Beyond that, adaptive elastic net and adaptive lasso outperform lasso and elastic net in almost settings. Furthermore, SSLS has significantly lower $l_1$ and $l_2$ loss no matter smaller model size or bigger one. Adaptive elastic net has good performance in the ideal setting like $\Sigma=I$ or low dimensional setting. But in the last model, both $l_1$ and $l_2$ loss have significantly increase. Adaptive lasso has the similar behavior.
\begin{table}[!ht]
  \centering
  \caption{The $l_1$ loss and $l_2$ loss results based on 100 replications.}
  \label{table:1}
\begin{tabular}{|c|c|ccc|}
\hline
&  Model    & $l_1$ norm & $l_2$ norm & \\
\hline
  Low dimension & $p=10$  ,    $n=100$ , $\Sigma=I$ &&&    \\
    \hline
& SSLS   &  0.6253& 0.3123&\\
& AEN   &  0.7387 &  0.4181&\\
& Lasso       &   1.5547& 1.6724 &\\
& Adaptive lasso   &  0.7249 & 0.4048&\\
& Elastic Net&   1.5660& 1.6412&\\

\hline
 &  $p=10$    ,  $n=100$ ,$\rho=0.5$ && &   \\
    \hline
& SSLS   & 0.8103& 0.5106&\\
 & AEN &   1.1308 & 1.0130  &\\
& Lasso    &      1.7977& 1.6172 &\\
 & Adaptive lasso   & 1.1226 & 0.9996&\\
 & Elastic Net&   1.9065& 2.6151  &\\

\hline
High dimension &  $p=400$    ,  $n=100$ , $\Sigma=I$ && &   \\
    \hline
& SSLS     &  0.7210& 0.3955&\\
& AEN &  1.1948 &  1.0508 &\\
 & Lasso    &   2.8008& 4.4897 &\\
& Adaptive lasso  & 1.1706 & 1.0113 &\\
& Elastic Net&  2.8388&  4.6071&\\

\hline
  & $p=400$     , $n=100$ , $\rho=0.5$ && &   \\
    \hline
& SSLS    &  0.7377& 0.4391&\\
& AEN &   1.9635 &  2.6211 &\\
& Lasso      &   3.6076& 7.4180 &\\
& Adaptive lasso   &  1.9480& 2.5853&\\
& Elastic Net&   3.7928&  8.1253&\\
\hline
\end{tabular}
\end{table}

\section{Empirical Analysis: Index Tracking}

We now focus on the application of penalized least squares and SSLS in financial modeling. The performances of the fitted and predicted results are tested when they are applied to track index. In this part, we first give a brief introduction of index tracking and conduct a linear regression model for the data from stock market.

Index tracking is one of the most popular topic in the financial field. It aims to replicate the movements of an index and is the core of the index fund. Furthermore, index tracking attempts to match the performance of index as closely as possible with as small subset as possible. Thus the statistical modeling built for index tracking is a typical high dimensional model. One suitable and successful approach who can leads to sparse solutions is necessary for index tracking.

The measure for index tracking, called (annual) tracking error, is a measure of the deviation of the return of replication from target index:
\begin{equation}
TrackingError_{year}= \sqrt{252} \times \sqrt{\frac{\sum(err_t -mean(err))^2}{T-1}},
\end{equation}
where $mean(err_t)$ is the mean of $err_t$, $t=1,...,T$ and $err_t=r_t-\hat r_t$. $r_t$ is the daily return.

Our data set consists of the prices of stocks in SP500. The data come from Wind Information Co., Ltd (Wind Info), from Jan. 2012 to Dec. 2013. We divide the data set by time window: five month's data used for modeling (train set = $98$) and one month's data used for forecasting (test set = $20$). $X_{i,t}$ and $y_t$ represent the prices of the $i$th constituent stock and the index, respectively. The relationship between $X_{i,t}$ and $y_t$ can be described by a linear regression model:
\begin{equation}
y_t=\sum\limits_{i}^{500} X_{i,t}\beta_{i} + \e_t,
\end{equation}
where $\beta_i$ is the weight of the $i$th chosen stock which sparse and unknown. $\e_t$ is the error term. We need to get the estimation of $\beta_i$ by applying statistical technical.
According to the notation, we can find that tracking index topic can be seen as a high-dimensional problem which $n=98$ or $n=20$, $p=500$. We don't use cross validation to obtain the suitable number of nonzero coefficients cause there always exists practical demand about the number of selected stocks in stock market.

We use SSLS to track the Index in the next part and use tracking error to be the performance measure to show the superiority of SSLS.

\subsection{Empirical Result}

We first show the fitted and predicted results under different number of selected stocks (50 VS 20) by using SSLS. In the Figure \ref{fig:34}, Nov. 2013 is chosen to be the prediction month and the previous five months are chosen to modeling. It is seen that Figure \ref{fig:4} get better performance than Figure \ref{fig:3}. That is, reducing the number of selected stocks should slightly increase the errors. Similar, varying the length of time segments should change the tracking results too.

\begin{figure}[!ht]
\centering
\subfigure[Select 20 stocks]{\label{fig:3}
\includegraphics[width=0.48\textwidth]{./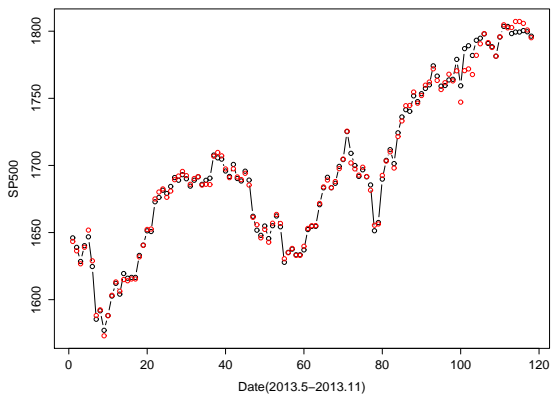}}
\subfigure[Select 50 stocks]{\label{fig:4}
\includegraphics[width=0.48\textwidth]{./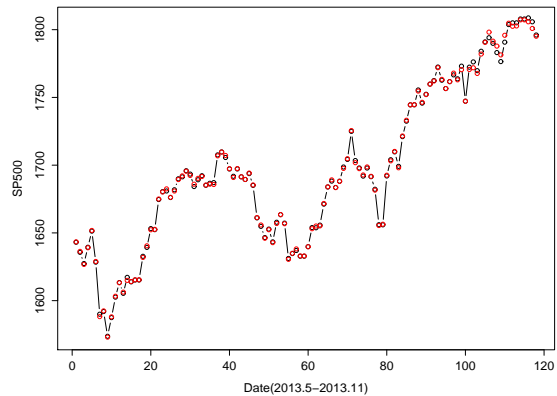}}
\caption{Select 20 stocks compared with select 50 stocks.}
\label{fig:34}
\end{figure}

Next, we select 50 constituent stocks and get the estimation of their weights in both modeling part and forecasting part by using SSLS in Figure \ref{fig:56}. As it is observed in the Figure \ref{fig:56}, fitted results are better than predicted results.

\begin{figure}[!ht]
\centering
\subfigure[Fitted results when select 50 stocks]{\label{fig:5}
\includegraphics[width=0.48\textwidth]{./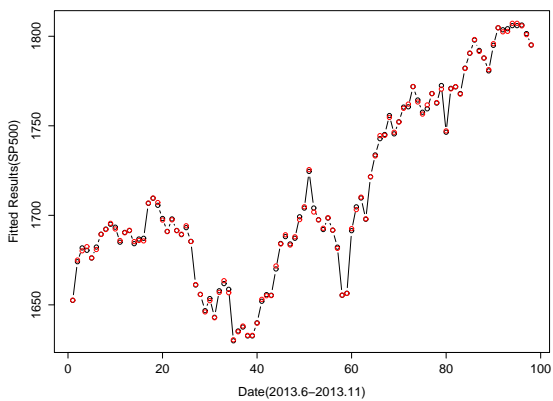}}
\subfigure[Predicted results when select 50 stocks]{\label{fig:6}
\includegraphics[width=0.48\textwidth]{./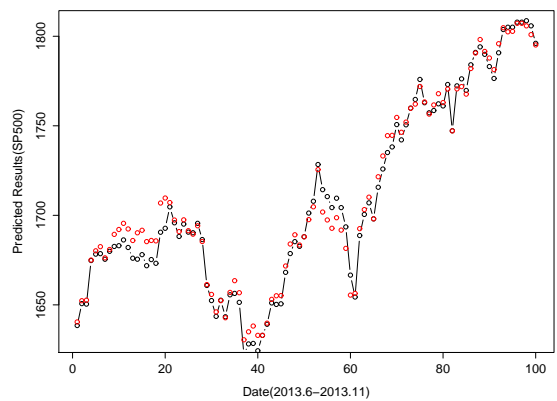}}
\caption{Fitted and predicted results by using SSLS.}
\label{fig:56}
\end{figure}

Furthermore, we implement two methods (SSLS, lasso) and use tracking error to be the measure. We summarize the 18 tracking errors for validation subsets during two years. Each results in Table \ref{table:2} and Table \ref{table:3} omit the percent symbol ($\%$).

See Table \ref{table:2}, SSLS always get lower fitted/predicted errors than lasso. For instance, when SSLS have $2.45\%$ predicted error in Oct. 2013, lasso get $3.95\%$. Furthermore, using SSLS to select 50 stocks, the predicted errors are nearly between $2\%$ and $2.5\%$, which more stable than lasso. By comparison, lasso increase their errors to $2\%$ and $4\%$ when get the same number of nonzero coefficients. Reducing the number of selected stocks to $20$, SSLS also outperforms lasso in almost all the months. The same behavior occurs in the fitted errors.

\begin{table}[!ht]
  \centering
  \setlength\tabcolsep{5pt}
  \caption{The fitted and predicted annual tracking errors obtained by different methods.}
  \label{table:2}
\begin{tabular}{|c||c|cc||c|cc|}
\hline
Methods & Data & Fitted(20) & Fitted(50) & Data & Pred(20)& Pred(50)    \\
    \hline
 SSLS  & 2013.05- &  2.71& 1.07& 2013 & 2.83 & 2.19\\
 Lasso    & 2013.10   &   2.96 & 2.52 & -11& 2.67  & 2.14\\
\hline
 SSLS  & 2013.4- & 3.41 &  1.29& 2013 &  3.30 & 2.50\\
 Lasso    & 2013.9   & 3.47 &  1.92  & -10 & 3.92 & 2.50\\
\hline
 SSLS  & 2013.3- & 2.74 & 0.89 & 2013 & 5.38 & 2.45\\
 Lasso    & 2013.8   & 2.96 &  1.43 & -9 & 5.87 & 3.95\\
\hline
 SSLS  & 2013.2- &  3.52 & 1.16& 2013  & 4.43 & 2.04\\
 Lasso    & 2013.7   &  3.60 & 1.58 & -8 & 3.88 & 2.25\\
\hline
 SSLS  & 2013.1- &  3.03& 1.08 & 2013 & 2.51 & 2.16\\
 Lasso    & 2013.6   &   3.52& 1.73 & -7 & 3.35 & 2.55\\
\hline
 SSLS  & 2012.12- &  2.60& 1.08& 2013 & 4.04 & 2.29\\
 Lasso    & 2013.5   &  4.04 & 1.67 & -6 & 3.96 & 2.50\\
\hline
\end{tabular}
\end{table}

In the Table \ref{table:3}, we compare SSLS and lasso by predicted tracking error in different settings. We consider three situations, selecting 20, 30 and 50 constituent stocks and SSLS always has the better performance. We also find that when we select only 20 stocks in SP500, the predicted error by using SSLS slightly increase but also stable, i.e. $2.71\%$ predicted error in Mar. 2013 and $3.09 \%$ in Aug.2012. At the same time, lasso get $3.79\%$ and $4.16\%$.
\begin{table}[!ht]
  \centering
  \setlength\tabcolsep{10pt}
  \caption{Predicted results under different selected stocks.}
  \label{table:3}
\begin{tabular}{|c||c|c|c|c||c|c|c|c|}
\hline
Methods & Data &  50  & 30& 20 &  Data  & 50  & 30& 20 \\
    \hline
 SSLS  & 2012 & 1.97 & 2.86  & 3.20 & 2012 & 3.60 & 3.68  & 6.54 \\
Lasso    & -6 & 3.09 & 4.04  & 3.90  & -7 & 3.81 & 3.78 & 4.73 \\
\hline
 SSLS  & 2012 & 2.60 & 2.79  & 3.09 & 2012 & 1.88 & 2.53 & 4.11 \\
Lasso    & -8 &   3.52 &  3.56 & 4.16 & -9 &  2.94& 4.23 & 4.95 \\
\hline
 SSLS  & 2012 & 2.46 & 3.15 & 4.44 & 2012 & 2.14 & 3.54 &  4.04 \\
Lasso    & -10 & 2.68 & 3.45  & 6.15 & -11 & 2.34 & 3.51 & 4.25 \\
\hline
 SSLS  & 2012 & 2.75 & 4.04  & 3.92 & 2013 & 2.89 & 3.12 & 4.48 \\
Lasso    & -12 & 3.20 & 4.56  & 5.89 & -1 & 2.87 & 3.62 & 4.55 \\
\hline
 SSLS  & 2013 & 2.50 & 2.32 & 3.30 & 2013 & 2.44 & 2.40 &  2.71\\
Lasso    & -2 & 2.20 & 2.02  & 2.79 & -3 & 2.16 & 2.88 & 3.79 \\
\hline
 SSLS  & 2013 & 2.75 & 3.55  & 5.23 & 2013 & 2.06 & 2.36 & 3.27 \\
Lasso    & -4 & 3.15 &  3.75 & 4.84 & -5 & 1.97 & 2.84 & 5.32 \\
\hline
 SSLS  & 2013 & 2.29 & 2.92 & 4.04 & 2013 & 2.16 & 2.82 & 2.51 \\
Lasso    & -6 & 2.50 &  3.11 & 3.96 & - 7 & 2.25 & 3.23 & 3.35 \\
\hline
 SSLS  & 2013 & 2.04 & 3.55 & 4.43 & 2013 & 2.45 & 4.82 & 5.38\\
Lasso    & -8 & 2.25 &  2.93 & 3.88 & -9 & 3.95 & 4.92 & 5.87\\
\hline
 SSLS  & 2013 & 2.50 & 2.66 &  3.30 & 2013 & 2.19 & 2.70 & 2.83 \\
Lasso    & -10 & 2.50 &  3.21 & 3.92 & -11 & 2.14 & 2.54 & 2.67 \\
\hline
\end{tabular}
\end{table}

As described in Table \ref{table:2} and Table \ref{table:3}, using SSLS and selecting 50 stocks to track SP500, both fitted and predicted annual tracking errors are nearly between $1\%$ and $2\%$. All these results show that SSLS is very successful in assets selection.

\section*{Acknowledgements}
We thank Peter Hall for his helpful comments and suggestions on this paper. This work was supported in part by the National Natural Science Foundation of China (Grant No. 11171361)

\section*{Appendix}
\appendix


First of all, we give follow results to illustrate the property of adaptive elastic net solution without detail proof.
\begin{lem}
  For any $y_n$, $X_n$ in \eqref{eqn:regressionmodel}, the adpative elastic net solution has at most $\min\{n,p\}$ nonzero components as follow
  \begin{equation}
      \hat \beta_{\hat S_n^c}=0,
  \end{equation}
  and
  \begin{equation}
  \label{eqn:solutionafterlemma}
    \hat \beta_{\hat S_n}=(X'_{\hat S_n} X_{\hat S_n}+\lambda_{2,n} I)^{-1}(X'_{\hat S_n} y_n - \lambda_{1,n} \hat w_{\hat S_n} \hat  s_n),
  \end{equation}
  where $\hat S_n$ is defined by
  \begin{equation}
  \hat S_n=\{  i\in \{1,...,p\}: |(X'_{n}(y_n -X_n \hat \beta_n)-\lam_{2,n}\hat \beta_n)_i/ \hat w_{i,n}| =  \lam_{1,n}  \}
  \end{equation}
  and $\hat s_n$ is the corresponding signs. 
\end{lem}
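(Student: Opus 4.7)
The plan is to derive the characterization via first–order (subdifferential) optimality conditions for the strictly convex problem~(2). First, I would observe that the objective in~(2) is the sum of the quadratic $\tfrac12\|y_n-X_n\beta\|_2^2+\tfrac12\lambda_{2,n}\|\beta\|_2^2$, whose Hessian is $X_n'X_n+\lambda_{2,n} I\succ 0$ as soon as $\lambda_{2,n}>0$, and the weighted $\ell_1$ norm $\lambda_{1,n}\sum_j\hat w_{j,n}|\beta_j|$, which is convex. Hence the full objective is strictly convex and the minimizer $\hat\beta_n$ exists and is unique, so the whole argument can proceed coordinate by coordinate from the subdifferential.

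Next I would write the optimality condition $0\in -X_n'(y_n-X_n\hat\beta_n)+\lambda_{2,n}\hat\beta_n+\lambda_{1,n}\hat W_n\,\partial\|\hat\beta_n\|_1$, where $\hat W_n=\mathrm{diag}(\hat w_{1,n},\dots,\hat w_{p,n})$. Reading this componentwise gives two cases. If $\hat\beta_{i,n}\neq 0$, then $(X_n'(y_n-X_n\hat\beta_n)-\lambda_{2,n}\hat\beta_n)_i=\lambda_{1,n}\hat w_{i,n}\operatorname{sign}(\hat\beta_{i,n})$, so the left side has absolute value exactly $\lambda_{1,n}\hat w_{i,n}$; this is precisely the defining condition of $\hat S_n$ in~(28), and the sign agrees with $\hat s_{i,n}$. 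If $\hat\beta_{i,n}=0$ the subdifferential is an interval and we only get $|(X_n'(y_n-X_n\hat\beta_n)-\lambda_{2,n}\hat\beta_n)_i|\leq\lambda_{1,n}\hat w_{i,n}$, which is the strict-inequality (inactive) case. Consequently $\hat\beta_{\hat S_n^c}=0$.

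Having eliminated the inactive coordinates, I use $X_n\hat\beta_n=X_{\hat S_n}\hat\beta_{\hat S_n}$ to restrict the stationarity equation to the rows indexed by $\hat S_n$:
\begin{equation*}
X_{\hat S_n}'\bigl(y_n-X_{\hat S_n}\hat\beta_{\hat S_n}\bigr)=\lambda_{2,n}\hat\beta_{\hat S_n}+\lambda_{1,n}\hat w_{\hat S_n}\odot\hat s_n,
\end{equation*}
which rearranges to $(X_{\hat S_n}'X_{\hat S_n}+\lambda_{2,n}I)\hat\beta_{\hat S_n}=X_{\hat S_n}'y_n-\lambda_{1,n}\hat w_{\hat S_n}\hat s_n$. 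Since $X_{\hat S_n}'X_{\hat S_n}+\lambda_{2,n}I$ is the sum of a positive semidefinite matrix and $\lambda_{2,n}I\succ 0$, it is invertible, yielding the closed form~(27).

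The remaining assertion, $|\hat S_n|\leq\min\{n,p\}$, is the subtle part and the one I expect to require the most care. The bound by $p$ is immediate. For the bound by $n$, the natural route is to use the augmentation of Algorithm~1: the adaptive elastic net is equivalent to a weighted lasso with design $\widetilde X_n=[X_n;\sqrt{\lambda_{2,n}}I]^{\top}$ scaled by $|\tilde\beta_{j,n}|^{\gamma}$ and response $\widetilde y_n=(y_n,0)$, and then invoke the LARS/homotopy characterization together with a genericity assumption on the design (essentially that no more than $n$ KKT equalities can simultaneously be tight with linearly independent gradients projected onto the $X_n$ block) to force $|\hat S_n|\leq n$. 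This is the only place where pure subdifferential calculus does not suffice; the cleanest way I would present it is as a short corollary of the LARS-EN construction rather than trying to read it directly off~(27), since~(27) itself remains well defined for any subset $\hat S_n\subseteq\{1,\dots,p\}$.
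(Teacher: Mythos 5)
The paper itself gives no proof of this lemma: it is stated explicitly ``without detail proof'' and justified only by an appeal to strict convexity plus a citation to analogous lasso results, so your KKT derivation is filling a real gap rather than paralleling an existing argument. Your subdifferential computation is correct as far as it goes: strict convexity for $\lambda_{2,n}>0$ gives existence and uniqueness, the componentwise stationarity condition yields $\hat \beta_{\hat S_n^c}=0$ (via the contrapositive: any nonzero coordinate forces the KKT equality, so coordinates outside the equicorrelation set $\hat S_n$ must vanish --- note that $\hat S_n$ may strictly contain the support, since the equality can also hold at a zero coordinate, but that does not affect either stated conclusion), and restricting the stationarity equation to $\hat S_n$ and inverting $X_{\hat S_n}'X_{\hat S_n}+\lambda_{2,n}I$ gives \eqref{eqn:solutionafterlemma}.

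The genuine gap is the cardinality claim $|\hat S_n|\leqslant \min\{n,p\}$, which you rightly isolate as the delicate part but whose proposed resolution does not work. The augmented design $\widetilde X_n=[X_n;\sqrt{\lambda_{2,n}}\,I]^{\top}$ has full column rank $p$ whenever $\lambda_{2,n}>0$, so the standard lasso/LARS cardinality bound (active set size at most the rank of the design) applied to the augmented problem yields only $|\hat S_n|\leqslant p$, not $\leqslant n$; the genericity statement you invoke --- that at most $n$ KKT equalities can be simultaneously tight --- is exactly what the $\sqrt{\lambda_{2,n}}\,I$ block destroys. Indeed, a well-advertised feature of the elastic net (Zou and Hastie, 2005) is precisely that, unlike the lasso, it is \emph{not} limited to selecting at most $n$ variables when $p>n$. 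So the $\leqslant n$ half of the lemma is false in general for $\lambda_{2,n}>0$ and cannot be rescued by your route (nor by the paper, which offers none); this is not cosmetic, since the paper later relies on $|\hat S_n|=d\leqslant n$ in bounding the bias and MSE in Theorems 2 and 5.
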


Since the adaptive elastic net penalty function is strictly convex. The solution is always unique, regardless of $X_n$. Similar result for lasso can be seen in \cite{candes2008enhancing,fuchs2004recovery,osborne2000lasso,tib201201,wainwright2009sharp}, the adaptive elastic net solution is given by a simple transformation hence omit proof here.

\begin{lem}
	Consider the linear regression model \eqref{eqn:regressionmodel} with $\epsilon_n$ is a vector of i.i.d. random variables with mean $0$ and variance $\sigma^2$. $X_n$ satisfies (C.1) and (C.2). Given the lasso program \eqref{eq:lasso} with regularization parameter $\lambda_n=4 \sigma (\frac{\log p}{n})^{1/2}$, then there exists constants $c_1,c_2>0$ such that, with probability at least
	$1-o(e^{-n^c})$, any solution $\hat{\beta}_{lasso}$ satisfies the bounds

	\begin{equation}
	||\hat{\beta}_{lasso}-\beta||_1 \leqslant K(\kappa) n^{\eta}.
	\end{equation}
\end{lem}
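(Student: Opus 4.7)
The plan is to follow the textbook Lasso analysis under the Restricted Eigenvalue condition, split into a deterministic step (the oracle inequality, given a good event on the noise) and a probabilistic step (control of the noise event in the ultra-high dimensional regime).

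First I would set up the \emph{basic inequality}. Since $\hat{\beta}_{lasso}$ minimizes the Lasso criterion \eqref{eq:lasso}, plugging in $\beta_n$ as a competitor and substituting $y_n=X_n\beta_n+\epsilon_n$ yields, after rearranging,
\begin{equation}
\tfrac{1}{2n}\|X_n(\hat\beta_{lasso}-\beta_n)\|_2^2 \leqslant \tfrac{1}{n}\epsilon_n^{\prime}X_n(\hat\beta_{lasso}-\beta_n) + \lambda_n(\|\beta_n\|_1-\|\hat\beta_{lasso}\|_1).
\end{equation}
Applying H\"older's inequality to the cross term gives an upper bound of $\tfrac{1}{n}\|X_n^{\prime}\epsilon_n\|_\infty \cdot \|h\|_1$, where $h=\hat\beta_{lasso}-\beta_n$. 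On the good event $\mathcal{E}=\{\tfrac{1}{n}\|X_n^{\prime}\epsilon_n\|_\infty\leqslant \lambda_n/2\}$, I would split $\|h\|_1$ into the coordinates in $S_n$ and $S_n^c$, use $\|\beta_n\|_1-\|\hat\beta_{lasso}\|_1\leqslant \|h_{S_n}\|_1-\|h_{S_n^c}\|_1$, and conclude the cone condition $\|h_{S_n^c}\|_1\leqslant 3\|h_{S_n}\|_1$. This places $h$ precisely in the set where (C.2) applies, so
\begin{equation}
\kappa_\iota \|h\|_2^2 \leqslant \tfrac{1}{n}\|X_n h\|_2^2 \leqslant 2\lambda_n \|h_{S_n}\|_1 \leqslant 2\lambda_n\sqrt{q}\,\|h\|_2.
\end{equation}
Dividing gives $\|h\|_2\leqslant 2\lambda_n\sqrt{q}/\kappa_\iota$, and the cone condition then yields $\|h\|_1\leqslant 4\|h_{S_n}\|_1\leqslant 4\sqrt{q}\,\|h\|_2\leqslant 8 q\lambda_n/\kappa_\iota$. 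With $\lambda_n=4\sigma(\log p/n)^{1/2}$, $p=O(e^{n^c})$ and $q=O(n^b)$, this is $O(n^{b+(c-1)/2})$, which is polynomial in $n$ and hence bounded by $K(\kappa)\,n^{\eta}$ under the choice of $\eta$ and $c$ specified in Theorem~\ref{thm:1}.

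Second, I would control $P(\mathcal{E}^c)$. Since $X_n^{\prime}\epsilon_n$ has $j$-th coordinate $\sum_{i=1}^n x_{ij}\epsilon_{i,n}$, a sum of independent mean-zero random variables with variance $\sigma^2\sum_i x_{ij}^2\leqslant n$ by the second part of (C.1), Bernstein's inequality (or a sub-Gaussian bound, matching the use of Bernstein invoked for Theorem~\ref{thm:1}) gives, for each fixed $j$, $P(|X_n^{\prime}\epsilon_n|_j/n > \lambda_n/2)\leqslant 2\exp(-C n\lambda_n^2)$. A union bound over $j=1,\ldots,p$ produces
\begin{equation}
P(\mathcal{E}^c)\leqslant 2p\,\exp(-C n\lambda_n^2)=2\exp\!\bigl(\log p - Cn\lambda_n^2\bigr).
\end{equation}
Plugging in $\lambda_n^2 = 16\sigma^2\log p/n$ and $\log p = O(n^c)$ gives $P(\mathcal{E}^c)=o(e^{-n^c})$ after absorbing constants, which is the required probability statement.

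The main obstacle will be matching the form of the concentration bound to the moment assumption the paper actually makes. The lemma assumes only mean zero and variance $\sigma^2$ for $\epsilon_{i,n}$, which is not in itself enough for Bernstein at sub-Gaussian speed; however the paper explicitly invokes Bernstein's inequality in the sibling proof of Theorem~\ref{thm:1}, so I would adopt the same implicit regularity (bounded or sub-Gaussian noise, as is customary in ultra-high dimensional Lasso analyses) and state it transparently. Apart from this, the argument is entirely mechanical: the RE step provides the deterministic rate, and the union bound provides the exponential tail, and their combination gives exactly the advertised conclusion.
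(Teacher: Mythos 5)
Your proposal is correct and follows essentially the same route as the paper: the paper delegates the deterministic step (basic inequality, cone condition, RE bound) to a citation of the decomposability analysis in the literature and then only writes out the Bernstein-plus-union-bound tail estimate, which is exactly the two-part structure you reproduce, with the deterministic half spelled out in full. Your remark that the stated moment assumption (mean zero, variance $\sigma^2$) is by itself insufficient for an exponential-speed tail bound identifies a gap present in the paper's own argument as well, and your fix of assuming sub-Gaussian or bounded noise matches what the paper implicitly relies on when it invokes Bernstein's inequality.
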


\begin{proof}[Proof of Lemma 2]

\cite{lv2009unified} gave a similar property for lasso when the noise vector $\epsilon_n$ has i.i.d. $N(0,1)$ entries. Through their results, $l_1$-norm is decomposable when $X_n$ satisfies (C.1) and (C.2) conditions. Also, the choice of $\lam_n$ is given in a similar way. The only difference is to compute the tail bound in the final step. This bound is also used in the proof of Theorem 1. 

By Bernstein's inequality and under condition (C.1) it follows that,
\begin{align}
& P(||X'_n \e/ n||_{\infty} > K n^{\eta}) \leqslant \sum\limits_{j=1}^{p} P(|X'_n \e/ n |> K n^{\eta}) \notag\\
& =  \sum\limits_{j=1}^{p} \exp[ -n^{\frac{2}{3}\eta+\frac{1}{2}} ]   = \exp[ n^c-n^{\frac{2}{3}\eta+\frac{1}{2} }]  =o(e^{-n^{c}}),
\end{align}
completing the proof.
\end{proof}

\begin{pf}
Since 
\begin{equation}
	\hat \beta_n =  \mathop {argmin }\limits_{\beta \in \mR^p}\{\frac{1}{2}||y_n-X_n\beta||^2_2+\frac{1}{2}\lam_{2,n} ||\beta||^2_2 + \lam_{1,n} \sum\limits_{j=1}^{p} \hat w_{j,n} |\beta_j|\}.
\end{equation}

Let $\hat u_n= \sqrt{n} (\hat \beta_n -\beta_n)$ and
\begin{equation}
F_n( \beta_n) =   \frac{1}{2}||y_n-X_n\beta_n||^2_2+\frac{1}{2}\lam_{2,n} ||\beta_n||^2_2 + \lam_{1,n} \sum\limits_{j=1}^{p} \hat w_{j,n} |\beta_{j,n}|.
\end{equation}

Define $V_n(\hat u_n)= F_n (\hat \beta_n)- F_n (\beta_n)$, it can be  written as
\begin{align}
V_n(\hat u_n) & = \frac{1}{2} \hat u_n' C_n \hat u_n - \hat u_n' W_n + \frac{\lam_{2,n}}{2n} \hat u_n' \hat u_n + \lam_{2,n} \frac{\hat u_n'}{\sqrt{n}} \beta_n \notag\\
&+ \lam_{1,n} \sum\limits_{j=1}^{p} \hat w_{j,n} \left( |\beta_{j,n}+ \frac{\hat u_{j,n}}{\sqrt{n}} | -|\beta_{j,n}| \right).
\end{align}

Define $C_n= \frac{1}{n}X'_n X_n$ and $W_n = X'_n \e / \sqrt{n}$. Let $\hat \beta^{(1)}_n$, $\hat \beta^{(2)}_n$ and $W^{(1)}_n$, $W^{(2)}_n$ as the first $q$ and last $p-q$ elements of $\hat \beta_n$ and $W_n$ respectively. Similar, $\hat u_n^{(1)}$ and $\hat u_n^{(2)}$ denote the first $q$ and last $p-q$ elements of $\hat u_n$. Similar as in the proof of Lemma 2, we have
\begin{align}
P(A^c_{1,n}) & = P(||W_n||_{\infty} > K n^{\eta}) \leqslant \sum\limits_{j=1}^{p} P(|W_{j,n} |> K n^{\eta}) \notag\\
& =  \sum\limits_{j=1}^{p} \exp[ -n^{\frac{2}{3}\eta} ]   = \exp[ n^c-n^{\frac{2}{3}\eta} ]  =o(e^{-n^{c}}).
\end{align}

Since $\widetilde{\beta}_n$ is computed by $\hat \beta_{lasso}$ and $\hat w_{j,n} = |\widetilde{\beta}_{j,n}|^{\gamma}$. Follow the result of Lemma 2, we have
\begin{align}
& P(||\hat{\beta}_{lasso}-\beta||_{\infty} \geqslant K(\kappa) n^{\eta})  \\
& \leqslant P(	||\hat{\beta}_{lasso}-\beta||_1 \geqslant K(\kappa) n^{\eta}) = o(e^{-n^{c}}).
\end{align}

Conditioned on $A_n$ and $\{||\hat{\beta}_{lasso}-\beta||_{\infty} \leqslant K(\kappa) n^{\eta}\}$, setting $0< \delta<1$, we have
\begin{align}
\label{eq:V1}
V_n(\hat u_n) & \geqslant ||\hat u_n^{(1)}||\left\{ ||\hat u_n^{(1)}|| \left( \frac{1-\delta}{2}\Lambda_{min}(C_{11,n}) + \frac{\lam_{2,n}}{2n} \right) + \frac{\lam_{2,n}}{\sqrt{n}} ||\beta^{(1)}_n|| -||W^{(1)}_n|| \right\} \notag\\
& - 2 \lam_{1,n} \sum\limits_{j=1}^{q} \frac{|\beta_{j,n}|}{|\widetilde{\beta}_{j,n}|^{\gamma}} + \sum\limits_{j=q+1}^{p} |\hat u_j| \left( \frac{\lam_{1,n}}{\sqrt{n}} \frac{1}{|\widetilde{\beta}_{j,n}|^{\gamma} } - |W_{j,n}| \right) \notag\\
& \geqslant ||\hat u_n^{(1)}||\left\{ ||\hat u_n^{(1)}|| \left( \frac{1-\delta}{2}Kn^{-a} + \frac{\lam_{2,n}}{2n} \right) + \frac{\lam_{2,n}}{\sqrt{n}} ||\beta^{(1)}_n|| -K n^{\eta+b} \right\} \notag\\
& - 2 \lam_{1,n} \sum\limits_{j=1}^{q} |\beta_{j,n}|^{1-\gamma} \left( 1+ K(\gamma,\kappa) n^{(\eta - \frac{1}{2})} \right) + \notag\\
& K(\gamma,\kappa) \sum\limits_{j=q+1}^{p} |\hat u_j| \left( \lam_{1,n} \cdot n^{-\frac{1}{2}} n ^{\frac{\gamma}{2}-\eta \gamma}- n^{\eta} \right).
\end{align}

Following the setting of $\eta$, $\gamma$ and $\lam_{1,n}$, through \eqref{eq:V1}, it follows that $V_n(\hat u_n) > 0$ when $||\hat u_n^{(1)}|| \geqslant M_n$,
\begin{equation}
M_n \equiv K(\delta,\lam_{2,n}) \cdot n^{\eta+a+b}.
\end{equation}

Since $V_n(0)=0$, the minimum of $V_n(\hat u_n)$ can not be attained at $||\hat u_n^{(1)}|| \geqslant M_n$. Then, assume $\{ \hat u_n \in \mR^p: ||\hat u_n^{(1)}|| < M_n, \hat u_n^{(2)} \neq 0 \}$, following inequalities hold uniformly:
\begin{align}
\label{eq:V2}
V_n(\hat u_n) - V_n( \hat u_n^{(1)}, 0) & = \frac{1}{2} (\hat u_n^{(1)})' C_{12,n} \hat u_n^{(2)} + \frac{1}{2} (\hat u_n^{(2)})' C_{22,n} \hat u^{(2)} - (\hat u^{(2)})'W^{(2)}_n  \notag\\
& + \frac{\lam_{2,n}}{2 n} (\hat u_n^{(2)})' \hat u_n^{(2)} +  \lam_{2,n} \frac{(\hat u^{(2)})'}{\sqrt{n}} \beta^{(2)}_n + \frac{\lam_{1,n}}{\sqrt{n}} \sum\limits_{j=q+1}^{p} \frac{|\hat u_{j,n}|}{|\widetilde{\beta}_{j,n}|^{\gamma}} \notag\\
& \geqslant \sum\limits_{j=q+1}^{p}|\hat u_{j,n}| \left[  \dfrac{\lam_{1,n}}{\sqrt{n}} |\widetilde{\beta}_{j,n}|^{-\gamma} -|W_{j,n}| -\frac{1}{2}\left|\left( (\hat u_n^{(1)})'C_{12,n}\right)_j\right|  \right] \notag\\
& \geqslant K \sum\limits_{j=q+1}^{p}|\hat u_{j,n}| \left[   \lam_{1,n} \cdot n^{-\frac{1}{2}} n ^{\frac{\gamma}{2}-\eta \gamma}- n^{\eta} - q^{1/2}\cdot M_n  \right] \notag\\
& >0.
\end{align}

The first inequality of \eqref{eq:V2} holds since $\frac{1}{2} (\hat u_n^{(2)})' C_{22,n} \hat u_n^{(2)} \geqslant 0$, $ \frac{\lam_{2,n}}{2 n} (\hat u^{(2)})' \hat u^{(2)} \geqslant 0$ and $\beta^{(2)}_n=0$. $\left( (\hat u^{(1)})'C_{12,n}\right)_j$ is bounded by $q^{1/2}||u^{(1)}||$. The last inequality holds by the setting of $\eta$,
\begin{equation}
0< \eta < \dfrac{\gamma-1}{2(\gamma+1)} - \dfrac{3b+2a}{2(\gamma+1)}.
\end{equation}

Then the minimum of $V_n(u_n)$ can not be attained at $u_n^{(2)} \neq 0$ too, hence we have the follow result,
\begin{equation}
 \mathop {argmin }\limits_{\hat u_n \in \mR^p} V_n(\hat u_n) \in B_n \equiv \left\{ u_n \in \mR^{p} :  ||\hat u_n^{(1)}|| \leqslant M_n, \hat u_n^{(2)} = 0  \right\},
\end{equation}
completing the proof.
\end{pf}

\begin{proof}[Proof of Corollary 1]
	When $\lam_{1,n} \geqslant K(\delta,\lam_{2,n}) \cdot n^{\eta+a+b}$, we have $V_n(\hat u_n) >0$ if
	\begin{equation}
	||\hat u^{(1)}_n|| \geqslant 3\lam_{1,n} ||\beta^{(1)}_n||,
	\end{equation}
	and hence \eqref{eq:V2} holds if 
	\begin{equation}
	n^{\eta \gamma} < n^{\frac{\gamma-1-b}{2}} ||\beta^{(1)}_n||,
	\end{equation}
	completing the proof
\end{proof}


\begin{pf}
Follow the setting in Lemma 2, $\hat S_n=\{j\in \{1,...,p\}: \hat \beta_{j,n} \neq 0 \}$, we have $\hat \beta_n= \hat \beta_{\hat S_n}$, conditioned on $\{ \hat S_n = S_n \}$, then
\begin{equation}
 \hat \beta_{n}= (X'_{S_n}X_{S_n}+ \lam_{2,n} I)^{-1}(X'_{S_n}y - \lam_{1,n} \hat w_{S_n} s_n),
\end{equation}
where $ s_n=sign(\beta_{S_n})$.

Considering the bias of $\hat \beta_n$, under (C.1) and (C.1) it follows that
\begin{align}
\label{eq:bias}
 ||E\hat \beta_{n} - \beta_n||_2 & \leqslant ||E\hat \beta_{S_n}  1_{\{
\hat S_n =S_n\}}- \beta_n||_2 + ||E\hat \beta_{\hat S_n}  1_{\{
\hat S_n \neq S_n\}}||_2 \notag\\
& \leqslant  ||E\hat \beta_{S_n}  - \beta_{n}||_2 +  ||E\hat \beta_{S_n} 1_{\{
\hat S_n \neq S_n\}}||_2 \notag\\
&+ ||E\hat \beta_{\hat S_n}  1_{\{
\hat S_n \neq S_n\}}||_2.
\end{align}

For every given $\lam_2$, under condition (C.1), the first item of the right hand of \eqref{eq:bias} can be calculated as follow
\begin{align}
||E\hat \beta_{S_n} - \beta_{n}||^2_2  \leqslant & (\Lambda_{min}(C_{11,n})/n +\lam_2)^{-2} \cdot (\lam^2_{2,n} ||\beta_{n}||^2_2  + \lam_{1,n}^2  ||E\hat w_{S_n}||_2^2    ) \notag\\
& \leqslant 2(Kn^{1-a}+\lam_{2,n})^{-2} \cdot (\lam^2_{2,n} ||\beta_{n}||^2_2  + \lam_{1,n}^2  ||E\hat w_{S_n}||_2^2    ),
\end{align}
where $a \in [0,1]$. By Cauchy-Schwarz inequality, the second item can be written as
\begin{align}
 ||E\hat \beta_{S_n}  1_{\{
\hat S_n \neq S_n\}}||_2^2 & \leqslant  E||\hat \beta_{S_n} ||_2^2 P(
\hat S_n \neq S_n) \notag\\
& \leqslant  P(\hat S_n \neq S_n) \cdot  (3||\beta_n||^2_2 +3(X'_{S_n} X_{S_n} + \lam_2 I)^{-2} \cdot \notag\\
& \quad \quad \quad \quad 
(\lam_{2,n}^2 ||\beta_n||^2_2 + \lam_{1,n}^2 ||E \hat w_{S_n}||_2^2+q \cdot n )) \notag\\
& \leqslant   3P(\hat S_n \neq S_n)\cdot (||\beta_n||^2_2 + (Kn^{1-a}+\lam_{2,n})^{-2} \cdot \notag\\
& \quad \quad \quad \quad 
(\lam_{2,n}^2 ||\beta_n||^2_2 + \lam_{1,n}^2 ||E \hat w_{S_n}||_2^2+q \cdot n )).
\end{align}

Setting $|\hat S_n|=d$, follow the result of Lemma 2 in Appendix, we know that $d \leqslant n$., the final item can be written as
\begin{align}
 ||E\hat \beta_{\hat S_n}  1_{\{
\hat S_n \neq S_n\}}||_2^2 & \leqslant  E||\hat \beta_{\hat S_n} ||_2^2 P(
\hat S_n \neq S_n) \notag\\
& \leqslant 3P(\hat S_n \neq S_n) \cdot (||\beta_n||^2_2 +(X'_{\hat S_n} X_{\hat S_n, n}+ \lam_{2,n} I)^{-2} \cdot \notag\\
& \quad \quad \quad \quad 
(\lam_{2,n}^2 ||\beta_n||^2_2 + \lam_{1,n}^2 E||\hat w_{n}||^2_2+d \cdot n)) \notag\\
& \leqslant 3P(\hat S_n \neq S_n)\cdot (||\beta_n||^2_2 + (Kn^{1-a}+\lam_{2,n})^{-2} \cdot \notag\\
& \quad \quad \quad \quad 
(\lam_{2,n}^2 ||\beta_n||^2_2 + \lam_{1,n}^2 E ||\hat w_{n}||_2^2 +n^2)).
\end{align}

Combining the above results, we obtain the bias of $\hat \beta_n$ as follow
\begin{align}
||E\hat \beta_{n} - \beta_n||_2^2 \leqslant & 2[1+ 3 P(\hat S_n \neq S_n)] \cdot (Kn^{1-a}+\lam_{2,n})^{-2} \cdot \notag\\
&(\lam_{2,n}^2 ||\beta_n||^2_2 + \lam_{1,n}^2 E||\hat w_{n}||^2_2) \cdot \notag\\
&+ 6 P(\hat S_n \neq S_n) \cdot (||\beta_n||^2_2+ n( n \vee p)).  
\end{align}

Next, we proof the MSE of the adaptive elastic net estimator
\begin{align}
\label{eq:MSE}
& E||\hat \beta_{n}- \beta_n||^2_2  \notag\\
& = E||\hat \beta_{n}- \beta_n||^2_2 1_{\{\hat S_n=S_n\}}+ E||\hat \beta_{n}- \beta_n||^2_2 1_{\{\hat S_n \neq S_n\}} \notag\\
& \leqslant E||\hat \beta_{S_n}- \beta_n||^2_2 1_{\{\hat S_n=S_n\}} + 2( E ||\hat \beta_{n}||_2^2  1_{\{\hat S_n \neq S_n\}} + E||\beta_n||  1_{\{\hat S_n \neq S_n\}} ) \notag\\
& \leqslant   3(Kn^{1-a}+\lam_{2,n})^{-2} \cdot 
(\lam_{2,n}^2 ||\beta_n||^2_2 + \lam_{1,n}^2 E|| \hat w_{S_n}||_2^2+q \cdot n ) \notag\\
&\quad + 2 \sqrt{P(\hat S_n \neq S_n)} (||\beta_{n}||^2_2 + \sqrt{E||\hat \beta_{n}||^4_2}).
\end{align}

$E||\hat \beta_n||^4_2$ satisfies
\begin{align}
\sqrt{E||\hat \beta_n||^4_2} & \leqslant 3 (||\beta_n||^2_2 + (Kn^{1-a}+\lam_{2,n})^{-2} \cdot \notag\\
& \quad \quad \quad \quad 
(\lam_{2,n}^2 ||\beta_n||^2_2 + \lam_{1,n}^2 E ||\hat w_{n}||_2^4 +n^2 )).
\end{align}

Therefore, if $n$ is large enough, \eqref{eq:MSE} can be written as
\begin{align}
E||\hat \beta_{n}- \beta_n||^2_2  \leqslant & 3[1+ 2 \sqrt{P(\hat S_n \neq S_n)}] \cdot (Kn^{1-a}+\lam_{2,n})^{-2} \cdot \notag\\
&(\lam_{2,n}^2 ||\beta_n||^2_2 + \lam_{1,n}^2 E||\hat w_{n}||^4_2+q \cdot n) \cdot \notag\\
&+ 8 \sqrt{P(\hat S_n \neq S_n)} \cdot (||\beta_n||^2_2+n^2),
\end{align}
which completes the proof.
\end{pf}


\begin{pf}
Setting
\begin{equation}
\label{equ:U1}
U^{(1)}_n = C^{-1}_{11,n}(\lam_{2,n})(W^{(1)}_n - n^{-1/2} \widetilde{s}^{(1)}_n  ),
\end{equation}
where $\widetilde{s}^{(1)}_n= (\widetilde{s}_{1,n},...,\widetilde{s}_{q,n})'$ with $\widetilde{s}_{j,n} \equiv sign (\beta_{j,n}) |\widetilde{\beta}_{j,n}|^{-\gamma}$, $1 \leqslant j \leqslant q$, and $C_{n}(\lam_{2,n})= \frac{1}{n} (X'_{n}X_{n}+ \lam_{2,n}I )$. \eqref{equ:U1} is the first $q$ elements of adaptive elastic net estimator.

Obtained by Theorem 1, we have
\begin{equation}
 \sqrt{n} (\hat \beta_n -\beta_n) =\mathop {argmin}\limits_{||\hat u_n^{(1)},0||}  V_n( \hat u_n^{(1)}, 0) = (U'^{(1)}_n,0')'
\end{equation}
Setting $T_n =D^{(1)}_nU_n^{(1)}$. By Taylors expansion and EE expansion, setting 

$Q_n=D^{(1)}_n (C^{-1}_{11,n}(\lam_{2,n})C_{11,n}C^{-1}_{11,n}(\lam_{2,n})) (D_n^{(1)})'$, $\Psi_{1,n} (B) = \int_B \psi_{1,n} (x)	dx$ and $\psi_{1,n}(x)$ is the Lebesgue density of the Edgeworth expansion for $T_{1,n}$.

We have
\begin{align}
\label{eqn:thm3}
& \sup\limits_{B \in C} | P(T_n \in B)- \Phi (B, \sigma^2 Q_n )| \notag\\
& \leqslant \sup\limits_{B \in C} | P(T_{n} \in B)- P(T_{1,n} \in B)|  + \sup\limits_{B \in C} | P(T_{1,n} \in B)- \Phi (B, \sigma^2 T_n )| \notag\\
& \leqslant \sup\limits_{B \in C} | P(T_{n} \in B)- P(T_{1,n} \in B)|  + \sup\limits_{B \in C} | P(T_{1,n} \in B)- \Psi_{1,n} (B)  | \notag\\
& +  \sup\limits_{B \in C}  |\Psi_{1,n} (B) - \Phi (B: \sigma^2 Q_n )| \notag\\
& \leqslant O(n^{-1/2} +||b_n|| + \lam_n n^{a+e(\gamma+1)-1}),
\end{align}
where $T_{1,n}$ is the Taylor's expansion of $T_n$ and $T_n -T_{1,n}$ is the remainder term obtained by Taylor's expansion. Therefore $||T_n -T_{1,n}||$ have bounds $o(n^{-1/2})$ and hence the first item of \eqref{eqn:thm3} is bounded by $ o(n^{1/2})$. The second inequality of \eqref{eqn:thm3} holds after calculations and bounds by (C.1), (C.3) and (C.4). More details can be seen in Bhattacharya and Ranga Rao(1986).

Setting $\sigma^2 \widetilde{Q}_n$ be the variance of $T_{1,n}$, $R^{(1)}_n$ be a $q \times q$ diagonal matrix with $j$th diagonal entry given by $sgn(\beta_{j,n})|\beta_{j,n}|^{-(\gamma +1)}$, $1 \leqslant j \leqslant q$. Under conditions(C.1), (C.3) and (C.4), we have

\begin{align}
& ||\widetilde{Q}_n-Q_n|| \leqslant \frac{\gamma \lam_{1,n}}{n} ||D_n^{-1} C^{-1}_{11,n} (\lam_2) R^{(1)}_n C^{-1}_{11,n} (\lam_{2,n})( D_n^{-1})'|| \notag\\
& + K \cdot n^{-1} \left( \frac{\lam^2_{1,n}}{n^2} \sum\limits_{i=1}^{n} \sum\limits_{j=1}^{q} |\beta_{j,n}|^{-2(\gamma+1)} \right)||D_n^{(1)} C^{-1}_{11,n}(\lam_{2,n})||^2 \notag\\
& \leqslant K \cdot\frac{\lam_{1,n}}{n} ||C^{1/2}_{11,n}(\lam_{2,n})||^2 \left( ||R^{(1)}_n|| + \frac{\lam_{1,n}}{n} \sum\limits_{j=1}^{q} |\beta_{j,n}|^{-2(\gamma+1)} \right) \notag\\
&  = O\left(\frac{\lam_{1,n}}{n} n^{a+e(\gamma+1)} \right).
\end{align}

Hence the final item of \eqref{eqn:thm3} holds
\begin{align}
& \sup\limits_{B \in C}  |\Psi_{1,n} (B) - \Phi (B: \sigma^2 Q_n )| \notag \\
& \leqslant \sup\limits_{B \in C}  |\int_{B} \phi(x, \sigma^2 \widetilde{Q}_n) - \phi (x, \sigma^2 Q_n )| + O(||b_n||)\notag \\
& \leqslant O(\lam_n n^{a+e(\gamma+1)-1}+||b_n||),
\end{align}
where $\phi(x, \sigma^2 Q_n)$ denotes the density of $N(0, \sigma^2 Q_n)$, and
\begin{equation}
||b_n|| \leqslant \frac{\lam_{1,n}}{\sqrt{n}} ||D_n^{(1)}C^{1/2}_{11,n}(\lam_{2,n})|| \cdot ||C^{1/2}_{11,n}(\lam_{2,n})|| \cdot ||s^{(1)}_n|| =O(n^{-\delta}),
\end{equation}
where is the part of the second item of Edgeworth expansion for $T_n$, completing the proof.

\end{pf}

\begin{pf}
Conditioned on $\{ \hat S_n=S_n  \}$, the SSLS estimator $\hat \beta_n(ssls)$ satisfies
\begin{align}
\hat \beta_n(ssls) & =  \left( X'_{S_n} X_{S_n} \right)^{-1}X'_{S_n}y \notag\\
& = \beta_{n} + \left( X'_{S_n} X_{S_n} \right)^{-1}X'_{S_n} \e .
\end{align}

Therefore
\begin{align}
& P( \alpha' \Sigma^{1/2}_{S_n}( \hat \beta_{n} (ssls) - \beta_{n} ) \leqslant t ) \notag\\
& \leqslant P ( \alpha' \Sigma^{1/2}_{S_n} ( \hat \beta_{S_n} (ssls) - \beta_{n} ), \hat S_n = S_n ) + P ( \hat S_n \neq S_n ) \notag\\
& \leqslant  P ( \alpha' \Sigma^{-1/2}_{S_n}X'_{S_n}\e )  + 2P( \hat S_n \neq S_n ),
\end{align}
and
\begin{equation}
2 P(\hat S_n \neq S_n) = o(e^{-n^{c}}) \rightarrow 0,  \ as \ n \rightarrow \infty.
\end{equation}

Write $r_i= \alpha' \Sigma^{-1/2}_{S_n}X'_{\cdot, i}$ where $i=1,...,n$ and $X'_{\cdot,i} \in \mR^q$, by Lyapunov conditions for the central limit theorem, we have
\begin{align}
E(\alpha' \Sigma^{-1/2}_{S_n}X'_{S_n}\e)^{2+\delta} & = \sum\limits_{i=1}^n E|\e_i|^{2+\delta}\cdot|r_i|^{2+\delta} \notag\\
& \leqslant E|\e|^{2+\delta}(\sum\limits_{i=1}^n|r_i|^2 (\max\limits_{i}|r_i|^{\delta})) \notag\\
& = E|\e|^{2+\delta} (\max\limits_{i}|r_i|^{2})^{\delta/2},
\end{align}
and
\begin{equation}
r_i^2 \leqslant 2\Sigma_{S_n}^{-1} \cdot \sum\limits_{j=1}^{q}x_{ij}^2,
\end{equation}
follow the condition in Theorem \ref{thm:4}, completing the proof.
\end{pf}

\begin{pf}

Conditional on $\{\hat S_n=S_n \}$, the SSLS estimator can be written as
\begin{equation}
\hat \beta_n(ssls)= ( X'_{S_n} X_{S_n} )^{-1} X_{S_n} y.
\end{equation}

Therefore
\begin{align}
\label{eq:ssls}
||E\hat \beta_n (ssls)- \beta_n||_2
 & \leqslant ||E\hat \beta_{S_n}(ssls)- \beta_{n}||_2
  + ||E \hat \beta_{S_n} (ssls) 1_{\{ \hat S_n \neq S_n \}}||_2 \notag\\
  & + ||E \hat \beta_{n} (ssls) 1_{ \{ \hat S_n \neq S_n \}}||_2.
\end{align}

Considering about the right hand of \eqref{eq:ssls},
\begin{equation}
E \hat \beta_{S_n} (ssls)=\beta_{S_n},
\end{equation}
and under condition (C.1) it follows that
\begin{align}
||E \hat \beta_{S_n} (ssls) 1_{ \{\hat S_n \neq S_n \}}||_2^2 & \leqslant E|| \hat \beta_{S_n} (ssls) ||_2^2 P( \hat S_n \neq S_n ) \notag\\
& =P( \hat S_n \neq S_n ) (  ||\beta_{n}||_2^2 + \frac{\sigma^2}{n} \cdot q \cdot \Lambda_{min}^{-1} (C_{11,n})   ) \notag\\
& \leqslant P( \hat S_n \neq S_n ) (  ||\beta_{n}||_2^2 + \sigma^2K^{-1} n^{a+b-1} ).
\end{align}

Setting $|\hat S_n|=d$, follow the result of Lemma 2, $d \leqslant n$ then
\begin{align}
||E \hat \beta_{n} (ssls) 1_{ \left\{\hat S_n \neq S_n \right\}}||^2_2 & \leqslant  E||\hat \beta_n(ssls)||^2_2  P(  \hat S_n \neq S_n )  \notag\\
& \leqslant P(\hat S_n \neq S_n) ( ||\beta_n||^2_2 + \frac{\sigma^2}{n} \cdot d \cdot K^{-1}n^a ) \notag\\
& \leqslant P(\hat S_n \neq S_n) ( ||\beta_n||^2_2 + \sigma^2  \cdot K^{-1}n^a ).
\end{align}

So the bias of SSLS estimator is bounded by
\begin{equation}
||E\hat \beta_n (ssls)- \beta_n||^2_2 \leqslant 2 P ( \hat S_n \neq S_n  ) ( 2||\beta_n||^2_2 + \sigma^2K^{-1} n^{a-1} + \sigma^2 K^{-1} n^{a-1} n \vee q ) .
\end{equation}

Considering the MSE of SSLS estimator, we have
\begin{align}
\label{eq:MSE2}
E||\hat \beta_n (ssls)- \beta_n||^2_2 & \leqslant E||\hat \beta_n (ssls)- \beta_n||^2_2  1_{ \{\hat S_n = S_n \}} +   E||\hat \beta_n (ssls)- \beta_n||^2_2 1_{ \{\hat S_n \neq S_n \}} \notag\\
& \leqslant \sigma^2K^{-1} n^{a+b-1} + 8 \sqrt{P(\hat S_n \neq S_n)} (||\beta_n||^2_2 + \sigma^2  \cdot K^{-1}n^a ).
\end{align}

The last inequality of \eqref{eq:MSE2} holds by the similar calculations of Theorem 2, which completes the proof.
\end{pf}


\begin{thebibliography}{10}
	
	\bibitem{belloni2013least}
	B.~Alexandre and C.~Victor.
	\newblock Least squares after model selection in high-dimensional sparse
	models.
	\newblock {\em Bernoulli}, 19:521--547, 2013.
	
	\bibitem{belloni20111$l_1$}
	A.~Belloni and V.~Chernozhukov.
	\newblock $1_1$-penalized quantile regression in high-dimensional sparse
	models.
	\newblock {\em The Annals of Statistics}, 39(1):82--130, 2011.
	
	\bibitem{chatterjee2013rates}
	A.~Chatterjee and S.~N. Lahiri.
	\newblock Rates of convergence of the adaptive lasso estimators to the oracle
	distribution and higher order refinements by the bootstrap.
	\newblock {\em The Annals of Statistics}, 41(3):1232--1259, 2013.
	
	\bibitem{candes2008enhancing}
	M.~B.~Wakin E.~J.~Candes and S.~P. Boyd.
	\newblock Enhancing sparsity by reweighted $l_1$ minimization.
	\newblock {\em Journal of Fourier Analysis and Applications}, 14(5-6):877--905,
	2008.
	
	\bibitem{Efron}
	B.~Efron, T.~Hastie, L.~Johnstone, and R.~Tibshirani.
	\newblock Least angle regression.
	\newblock {\em The Annals of Statistics}, 32(2):407--451, 2004.
	
	\bibitem{Efron07}
	B.~Efron, T.~Hastie, and R.~Tibshirani.
	\newblock Discussion: The dantzig selector: statistical estimation whenpis much
	larger than n.
	\newblock {\em Journal of the Royal Statistical Society: Series B},
	35:2358--2364, 2007.
	
	\bibitem{fan2014adaptive}
	J.~Q. Fan, Y.~Y. Fan, and E.~Barut.
	\newblock Adaptive robust variable selection.
	\newblock {\em The annals of Statistics}, 42(1):324--251, 2014.
	
	\bibitem{fan2001variable}
	J.~Q. Fan and R.~Z. Li.
	\newblock Variable selection via nonconcave penalized likelihood and its oracle
	properties.
	\newblock {\em Journal of the American Statistical Association},
	96(456):1348--1360, 2001.
	
	\bibitem{fan2011nonconcave}
	J.~Q. Fan and J.~C. Lv.
	\newblock Nonconcave penalized likelihood with np-dimensionality.
	\newblock {\em Information Theory, IEEE Transactions on}, 57(8):5467--5484,
	2011.
	
	\bibitem{fuchs2004recovery}
	J.~J. Fuchs.
	\newblock Recovery of exact sparse representations in the presence of noise.
	\newblock In {\em Acoustics, Speech, and Signal Processing, 2004.
		Proceedings.(ICASSP'04). IEEE International Conference on}, volume~2, pages
	ii--533. IEEE, 2004.
	
	\bibitem{HuangJian08(adaptive)}
	S.~G.~Ma H.~Jian and C.~H. Zhang.
	\newblock Adaptive lasso for sparse high-dimensional regression models.
	\newblock {\em Statistica Sinica}, 18:1603--1618, 2008.
	
	\bibitem{bradic2011penalized}
	J.~Q.~Fan J.~Bradic and W.~W. Wang.
	\newblock Penalized composite quasi-likelihood for ultrahigh dimensional
	variable selection.
	\newblock {\em Journal of the Royal Statistical Society: Series B (Statistical
		Methodology)}, 73(3):325--349, 2011.
	
	\bibitem{huang2008asymptotic}
	J.~L.~Horowitz J.~Huang and S.~Ma.
	\newblock Asymptotic properties of bridge estimators in sparse high-dimensional
	regression models.
	\newblock {\em The Annals of Statistics}, 36(2):587--613, 2008.
	
	\bibitem{lv2009unified}
	J.~C. Lv and Y.~Y. Fan.
	\newblock A unified approach to model selection and sparse recovery using
	regularized least squares.
	\newblock {\em The Annals of Statistics}, 37(6A):3498--3528, 2009.
	
	\bibitem{osborne2000lasso}
	B.~Presnell M.~R.~Osborne and B.~A. Turlach.
	\newblock On the lasso and its dual.
	\newblock {\em Journal of Computational and Graphical statistics},
	9(2):319--337, 2000.
	
	\bibitem{meinshausen2006high}
	N.~Meinshausen and P.~B{\"u}hlmann.
	\newblock High-dimensional graphs and variable selection with the lasso.
	\newblock {\em The Annals of Statistics}, 34(3):1436--1462, 2006.
	
	\bibitem{bickel2009simultaneous}
	Y.~Ritov P.~J.~Bickel and A.~B. Tsybakov.
	\newblock Simultaneous analysis of lasso and dantzig selector.
	\newblock {\em The Annals of Statistics}, 37(4):1705--1732, 2009.
	
	\bibitem{lee2010adaptive}
	J.~Zhu S.~Lee and E.~P Xing.
	\newblock Adaptive multi-task lasso: with application to eqtl detection.
	\newblock In {\em Advances in neural information processing systems}, pages
	1306--1314, 2010.
	
	\bibitem{Tibshirani1996}
	R.~Tibshirani.
	\newblock Regression shrinkage and selection via the lasso.
	\newblock {\em Journal of the Royal Statistical Society: Series B.},
	58:267--288, 1996.
	
	\bibitem{tib201201}
	R.~J. Tibshirani.
	\newblock The lasso problem and uniqueness.
	\newblock {\em Electron. J. Statist.}, 7:1456--1490., 2013.
	
	\bibitem{tropp2004greed}
	J.~A. Tropp.
	\newblock Greed is good: Algorithmic results for sparse approximation.
	\newblock {\em Information Theory, IEEE Transactions on}, 50(10):2231--2242,
	2004.
	
	\bibitem{wainwright2009sharp}
	M.~J. Wainwright.
	\newblock Sharp thresholds for noisy and high-dimensional recovery of sparsity
	using $l_1$-constrained quadratic programming (lasso).
	\newblock {\em IEEE Transactions on Information Theory}, 55(5):2183--2202,
	2009.
	
	\bibitem{YuBin06(lasso)}
	P.~Zhao and B.~Yu.
	\newblock On model selection consistency of lasso.
	\newblock {\em Journal of Machine Learning Research}, 7:2541--2563, 2006.
	
	\bibitem{Zou06(adaptive)}
	H.~Zou.
	\newblock The adaptive lasso and its oracle properties.
	\newblock {\em Journal of the American Statistical Association},
	101:1418--1429, 2006.
	
	\bibitem{Zou2005(elastic)}
	H.~Zou and T.~Hastie.
	\newblock Regularization and variable selection via the elastic net.
	\newblock {\em Journal of the Royal Statistical Society: Series B},
	67:301--320, 2005.
	
	\bibitem{zou2009adaptive}
	H.~Zou and H.~L. Zhang.
	\newblock On the adaptive elastic-net with a diverging number of parameters.
	\newblock {\em Annals of statistics}, 37(4):1733, 2009.
	
\end{thebibliography}
\end{document}